\title{A Note on Computing Betweenness Centrality \\ from the 2-core}
\author{Charalampos E. Tsourakakis}
\address{RelationalAI, Inc.}
\email{charalampos.tsourakakis@relationalai.com}
\newcommand{\beql}[1]{\begin{equation}\label{#1}}
\newcommand{\beq}[1]{\begin{equation}\label{#1}}
\newcommand{\eeq}{\end{equation}}
\newcommand{\Prob}[1]{\ensuremath{{\bf{Pr}}\left[{#1}\right]}}
\newcommand{\Mean}[1]{\ensuremath{{\mathbb E}\left[{#1}\right]}}
\newcommand{\Var}[1]{{\mathbb Var}\left[{#1}\right]}
\newtheorem{thm}{Theorem}[section]
\newtheorem{cor}[thm]{Corollary}
\newtheorem{proposition}[thm]{Proposition}
\newtheorem{definition}[thm]{Definition}
\newcommand{\mybc}{\mathrm{bc}}
\newcommand{\bc}[2][]{\mathrm{bc}_{#1}(#2)}
\newcommand{\degree}[2]{\mathrm{deg}_1^{(#2)}(#1)}
\begin{document}

 \maketitle

\begin{abstract}
A central task in network analysis is to identify important nodes in a graph. Betweenness centrality (BC) is a popular centrality measure  that captures the significance of nodes based on the number of shortest paths each node intersects with~\cite{brandes2008variants,freeman1977set}. In this note, we derive a recursive formula to compute the betweenness centralities of a graph from the betweenness centralities of its 2-core.
Furthermore, we analyze mathematically the significant impact of removing degree-one nodes on the estimation of betweenness centrality within the context of the popular pivot sampling scheme for Single-Source Shortest Path (SSSP) computations, as described in the Brandes-Pich approach~\cite{brandes2007centrality,jacob2005algorithms} and implemented in widely used software such as NetworkX. We demonstrate both theoretically and empirically that removing degree-1 nodes can reduce the sample complexity needed to achieve better accuracy, thereby decreasing the overall runtime.
\end{abstract}

\section{Introduction}
\label{sec:intro}

Betweenness Centrality (BC)~\cite{anthonisse1971rush,freeman1977set} is a widely used measure in the analysis of complex networks, including applications in social networks and web page search algorithms~\cite{koschutzki2005centrality}.  It is routinely calculated in graphs to assess the significance of nodes and forms a component of various commercial knowledge graph databases such as RelationalAI~\cite{rel} and Neo4j~\cite{neo4j} and it has been a central topic in the field of graph analytics~\cite{riondato2014fast,riondato2018abra,brandes2001faster,brandes2007centrality,jacob2005algorithms}.  Its computation is based on shortest paths. Specifically, for each (ordered) pair of vertices in a connected graph, there are possibly many shortest paths between them. Betweenness centrality of a vertex \( v \) quantifies the number of times a node acts as an intermediary along the shortest path between two other nodes. Formally, given a graph $G(V,E)$ the betweenness centrality is defined as follows:

\begin{definition}[{\bf Betweenness Centrality~\cite{freeman1977set} }]
\label{def:bc}
The Betweenness Centrality (BC) score $\bc{v}$ of a vertex $v$ in the unweighted, possibly directed graph $G=(V,E)$ is:
\begin{equation}
\mathrm{bc}(v) = \frac{1}{(n-1)(n-2)}  \sum_{s,t\in V, s\not=t\not=v} \delta_{st}(v),
\end{equation}
where $\delta_{st}(v)$ is a pair-dependency, namely the fraction of $s-t$ shortest paths passing through node $v$:
\begin{equation}
\delta_{st}(v) = \frac{\sigma_{st}(v)}{\sigma_{st}}.
\end{equation}

\noindent Here, $\sigma_{st}(v)$ denotes the number of the shortest paths from $s$ to $t$ that pass through $v$, and $\sigma_{st}$ denotes the total number of the shortest paths from $s$ to $t$.
\end{definition}

\noindent Computing the betweenness centrality scores can be computationally demanding. The most efficient algorithm for determining exact BC values to date was developed by Brandes, which operates in \(O(nm)\) time for unweighted graphs and \(O(nm + n^2 \log n)\) for weighted graphs~\cite{brandes2001faster}. Here, $n, m$ represent the number of nodes and edges in the graph respectively.  Consequently, randomized algorithms and graph preprocessing techniques have been utilized, strategically segmenting the graph into distinct components by manipulating bridges, articulation points, and degree-1 nodes~\cite{sariyuce2017graph}. We describe these key approaches  in Section~\ref{sec:related}.  In this paper, we explore the concept of removing degree-1 nodes in greater detail compared to~\cite{sariyuce2013shattering,sariyuce2017graph}. We analyze the impact of eliminating degree-1 vertices and derive a clear recurrence relation for calculating betweenness centralities from the 2-core.  This is crucial as it forms the basis for studying the impact of sampling on betweenness centrality computations and for understanding how to enhance BC estimation performance in real-world knowledge graph database systems.
Our empirical findings combined with our theoretical contributions show that on many real-world networks, even a single round of peeling degree-1 nodes, offers substantial advantages.

\section{Related Work}
\label{sec:related}

\subsection{Betweenness centrality in practice} Betweenness centrality (BC) is extensively utilized in network analysis to assess the importance of nodes based on their strategic positions within the network. Here, we highlight several applications that illustrate the diverse utility and effectiveness of BC across various domains. Freeman~ \cite{freeman1977set} introduced it to identify
influential individuals within social networks who can facilitate efficient communication and information spread. These key individuals can be leveraged in strategies such as marketing or information dissemination within these networks.
Furthermore, in transportation networks, nodes with high betweenness centrality are crucial for traffic flow and often prioritized in infrastructure planning and maintenance \cite{freeman1977set}. These nodes are vital for ensuring efficient transportation and logistics operations. Newman~\cite{Newman2005} demonstrates that in communication networks, nodes with high betweenness centrality are crucial for efficient data routing. Similarly, in financial networks, institutions with high betweenness centrality are closely monitored for systemic risk, as their failure can have far-reaching impacts on the entire financial system. Betweenness centrality has been used to identify ``super-spreaders'' in networks in order to understand and control epidemics\cite{kitsak2010identification} and also perform connectivity attacks~\cite{mahmoody2016scalable},  Vogiatzis and Pardalos for evacuation crisis management~\cite{vogiatzis2016evacuation}, and Chen et al. to identify cricital bridges during seismic events~\cite{chen2023betweenness}.

\subsection{Brandes' algorithm} We discuss Brandes' algorithm in the context of unweighted graphs which is the main focus of this note and that yields an asymptotic improvement over the straightforward $O(n^3)$ all-pairs shortest paths (APSP) approach~\cite{clrs} for sparse graphs.  Brandes' algorithm breaks up the sum in definition~\ref{def:bc}  by the source node $s$. Specifically, Brandes' introduces the dependency of a source vertex $s\in V$ on a vertex $v\in V$ as 

\begin{equation}
\label{eq:delta}
\delta_{s}(v) = \sum_{\substack{t\in V \\ s\not= t\not= v}} \delta_{st}(v).
\end{equation}

We will conceptually represent the set of $\{\delta_s(v)\}$ values as an $n\times n$ matrix $\delta$ where $\delta[s,v] = \delta_s(v)$. Note that the betweenness centrality (BC)  of a node $v$ can be rewritten as

\begin{equation}
bc(v) = \frac{1}{(n-1)(n-2)} \sum_{s\neq v} \delta_{s}(v).
\end{equation}

\noindent In other words, the BC score of a node $v$ is simply the normalized sum of the respective column of $\delta$. Brandes observed that the BC scores can be computed using a two phase algorithm.  The first phase (top-down) computes the shortest path DAG routed at $s$ for each source node. If the graph is unweighted, we apply simple breadth first search (BFS) resulting in $O(n+m)$ time for each source $s$, otherwise we use Dijkstra's algorithm which takes $O(m+n \log n)$ time in the main memory (RAM) computational model~\cite{clrs}. 
The second phases goes bottom-up and updates the $\delta$ values.  Brandes provides a recursive way for the bottom-up phase. To describe it, we need to introduce some further notation following Brandes~\cite{brandes2001faster}.

\begin{definition}
Given a graph $G(V,E)$, the  predecessors  of a vertex $w\in V$ on a shortest path from $s$ to $w$ is a subset $P_s(v)\subseteq V$  such that 

$$ P_s(w) =\{v \in V :  d(s,w) = d(s,v) + 1, (v,w) \in E\},$$
where $d(s,t)$ denotes the shortest path distance between $s,t$.
\end{definition}

\noindent Brandes proved the following set of equations, which are valid for all \( s \) and \( v \), and these form the foundation of his algorithm.

\begin{equation}
\label{eq:deltaseq}
\delta_{s}(v) = \sum_{w \in V:  v \in P_s(w)} \frac{\sigma_{sv}}{\sigma_{sw}}(1+\delta_{s}(w)).
\end{equation}

The   run time for the first phase is $O(n(n+m))$ and for the second phase $O(nm)$ for an unweighted graph, yielding a total complexity $O(n^2+nm)$. For weighted graph, the total run time assuming Dijkstra's algorithm is $ O(n(m+n\log n)+nm) = O(nm+n^2\log n)$. Although Brandes' algorithm is the fastest for computing exact BCs, its time complexity can be impractical for many real-world applications, such as very large graphs or extensive datasets.

\subsection{Speeding up BC computation} We focus on the approaches that lie closest to our work; it is worth noting that a wide range of other methods for parallel and distributed betweenness centrality computations have been developed, e.g.,~\cite{bader2006parallel,hoang2019round}. A key method to accelerate the computation of betweenness centrality involves easing the requirement for precise calculations and employing randomized algorithms. One popular sampling approach~\cite{brandes2007centrality,hayashi2015fully,jacob2005algorithms} is to  sample $K$ nodes uniformly at random as sources for  Single Source Shortest Path (SSSP) computations and use this information to estimate the BC scores.   If $K$ is set equal to  \(O\left(\frac{1}{\epsilon^2} (\log n + \ln \frac{1}{\delta})\right)\), where \(\epsilon\) represents the approximation error, \(\delta\) the confidence level, we obtain an $(\epsilon, \delta)$ approximation with high probability for all nodes, namely $| \tilde{\mybc}(u) - \bc[]{u}| \leq \epsilon $ with probability at least $1-\delta$.  Riondato and Kornaropoulos suggested a method of sampling shortest paths and utilized the concept of VC dimension~\cite{vapnik2015uniform} to enhance the required sample size from \(\log n\) to \(\log \text{diam}(G)\), where \(\text{diam}(G)\) denotes the diameter of the graph. Geisberger et al.~\cite{geisberger2008contraction} prove that the Riondato-Kornaropoulus bound can lead to an overestimation.  Riondato and Upfal bypass the need for computing the graph diameter by analyzing the sampling of pairs of nodes using Radamacher averages~\cite{riondato2018abra},  see also~\cite{cousins2023bavarian}. Bader et al. proposed an adaptive sampling algorithm that effectively approximates the betweenness centrality scores for the most influential vertices, as detailed in~\cite{bader2007approximating}. Their bounds were further improved by Ji et al.~\cite{ji2016refining}. Pellegrina and Vandin focus also on providing a high-quality approximation of the top-k betweenness centralities~\cite{pellegrina2023silvan}. 
A different line of work~\cite{puzis2015topology,baglioni2012fast,sariyuce2013shattering,sariyuce2017graph} pre-processes the input graph by removing articulation points, bridges and degree-1 nodes. Notice that if a node $u$ has degree equal to 1, then the edge incident to it is a bridge. Sariyuce et al.  identify the bi-connected components of the graph and consolidating them into ``supernodes''. These supernodes are subsequently linked within the graph's biconnected tree. The crucial insight is that if a shortest path starts and ends at two distinct nodes within this tree, then all shortest paths connecting these nodes will pass through the same edges of the tree. This allows for the independent computation of betweenness centrality in each component, followed by an aggregation of the results. 
However, as pointed out by Riondato and Kornaropoulos~\cite{riondato2014fast} do not provide a complexity analysis (neither worst nor average case) of their algorithm. Naturally, they do not provide any justification of theoretical speedups.  Compared to Sariyuce et al.~\cite{sariyuce2013shattering,sariyuce2017graph}, we focus on the special case of degree-1 vertices, since the degree sequence is fast to compute and real-world networks tend to have many such nodes.  We derive a clear recurrence relation that illuminates the interplay between degree-1 nodes and BC scores, providing a theoretical foundation for potential speedups. To our knowledge, this is the first time our equations establish the basis for understanding the effect of peeling degree-1 nodes from the original graph on approximate BC estimation. Bentert et al.~\cite{bentert} propose an algorithm that combines 
the empirical works of~\cite{baglioni2012fast,sariyuce2013shattering,sariyuce2017graph,puzis2015topology} and a novel degree-2 vertex processing to yield an exact algorithm for BC that has worst time run time complexity  $O(n \cdot \textrm{OPT}_{fac})$ where $\textrm{OPT}_{fac}$ is the minimum feedback edge set of the input graph.

\subsection{Theoretical preliminaries: probabilistic inequalities}

We use the following two  probabilistic inequalities due to Hoeffding~\cite{hoeffding1994probability} and Bernstein~\cite{bernstein1924modification}, see also~\cite{vershynin2018high,boucheron2013concentration}.

\begin{proposition}[Hoeffding's inequality~\cite{hoeffding1994probability}]
\label{prop:hoeffding}

Let $X_1, \ldots, X_k$ be independent identically distributed (iid) random variables with $0 \leq X_i \leq M$  $(i = 1, \ldots, k)$ and an arbitrary \(\xi \geq 0\),

$$\Prob{ \left| \frac{X_1 + \cdots + X_k}{k} - E \left( \frac{X_1 + \cdots + X_k}{k} \right) \right| \geq \xi } \leq e^{-2k \left( \frac{\xi}{M} \right)^2}.
$$ 
\end{proposition}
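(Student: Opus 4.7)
The plan is to prove the proposition via the standard Cram\'er--Chernoff exponential moment method, which reduces the tail estimate to controlling the moment generating function (MGF) of a centered bounded random variable.

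First, I would handle the upper tail. Write $S_k = X_1 + \cdots + X_k$ and $\mu = \mathbb{E}[X_1]$. By the exponential Markov inequality, for every $t > 0$,
\[
\Pr[S_k - k\mu \geq k\xi] \;\leq\; e^{-tk\xi}\,\mathbb{E}\!\left[e^{t(S_k - k\mu)}\right].
\]
By independence of the $X_i$, the MGF factorizes as $\mathbb{E}[e^{t(S_k - k\mu)}] = \prod_{i=1}^{k} \mathbb{E}[e^{t(X_i - \mu)}]$, reducing the problem to a single-variable MGF estimate.

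The central technical step is Hoeffding's lemma: if $Y$ satisfies $\mathbb{E}[Y] = 0$ and $a \leq Y \leq b$ almost surely, then $\mathbb{E}[e^{tY}] \leq \exp\!\bigl(t^2 (b-a)^2/8\bigr)$. I would prove this by the convexity of $y \mapsto e^{ty}$ on $[a,b]$, which yields the linear upper bound $e^{ty} \leq \tfrac{b-y}{b-a}\, e^{ta} + \tfrac{y-a}{b-a}\, e^{tb}$. Taking expectations and using $\mathbb{E}[Y]=0$ rewrites the right-hand side as $e^{\phi(u)}$ with $u = t(b-a)$, where $\phi$ is a smooth function whose second derivative equals the variance of a Bernoulli variable and is therefore bounded by $1/4$; a second-order Taylor expansion with Lagrange remainder then gives $\phi(u) \leq u^2/8$.

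Applying Hoeffding's lemma to $Y_i = X_i - \mu$, which lies in $[-\mu,\, M-\mu]$ of width exactly $M$, yields $\mathbb{E}[e^{t(X_i-\mu)}] \leq \exp(t^2 M^2/8)$. Substituting back gives
\[
\Pr[S_k/k - \mu \geq \xi] \;\leq\; \exp\!\left(-tk\xi + kt^2 M^2/8\right),
\]
and optimizing over $t>0$ at $t^{\star} = 4\xi/M^2$ produces the desired $\exp(-2k\xi^2/M^2)$. The lower tail follows by applying the same argument to the iid variables $M - X_i \in [0,M]$, and the two-sided bound asserted in the proposition is obtained via a union bound (with the resulting constant factor $2$ absorbed into the exponent, per the convention of the stated inequality).

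The main technical obstacle is Hoeffding's lemma itself, and more specifically the variance-style bound $\phi''(u)\leq 1/4$, which is where the sharp constant $1/8$ in the MGF estimate -- and therefore the constant $2$ in the final exponent -- is produced. Once that lemma is in hand, the remainder of the argument is a routine combination of exponential Markov, independence, and scalar optimization over the free parameter $t$.
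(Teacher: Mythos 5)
The paper offers no proof of this proposition: it is imported verbatim as a classical result with a citation to Hoeffding, so there is nothing internal to compare your argument against. Your proposal is the standard and correct Cram\'er--Chernoff derivation --- exponential Markov inequality, factorization of the moment generating function by independence, Hoeffding's lemma via convexity and the bound $\phi''(u)\leq 1/4$ on the log-MGF of the induced two-point distribution, and optimization at $t^{\star}=4\xi/M^{2}$, which indeed yields the exponent $-2k\xi^{2}/M^{2}$ for each one-sided tail. The one point to be careful about is the final step: the union bound over the two tails produces $2e^{-2k(\xi/M)^{2}}$, and this prefactor of $2$ cannot literally be ``absorbed into the exponent'' without weakening the constant $2$ in front of $k(\xi/M)^{2}$. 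As stated, the proposition gives the two-sided event the one-sided bound, which is a (common, harmless in context) imprecision in the paper's statement rather than a defect of your argument; you correctly identified where the discrepancy arises. Your proof is otherwise complete and sound.
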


\begin{proposition}[Bernstein's inequality for bounded distributions~\cite{bernstein1924modification}]
\label{prop:bernstein}

Let $X_1, \ldots, X_N$ be independent mean-zero random variables such that $|X_i| \leq K$ for all $i$. Then, for every $t \geq 0$, we have
\[
\mathbb{P}\left(|\sum_{i=1}^N X_i| \geq t\right) \leq 2\exp\left(-\frac{t^2/2}{\sigma^2 + Kt/3}\right).
\]
Here $\sigma^2 = \sum_{i=1}^N \mathbb{E}X_i^2$ is the variance of the sum.

\end{proposition}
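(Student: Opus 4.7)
The plan is to follow the classical Chernoff-Cramér approach: bound the moment generating function (MGF) of each $X_i$, multiply these bounds together using independence, apply Markov's inequality to the exponential of the sum, and finally optimize over the free parameter. This standard route produces the exact right-hand side in Proposition~\ref{prop:bernstein}.

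First I would prove the one-sided bound $\mathbb{P}\bigl(\sum_{i=1}^N X_i \geq t\bigr) \leq \exp\bigl(-t^2/(2(\sigma^2+Kt/3))\bigr)$. For any $\lambda > 0$, Markov's inequality gives
\[
\mathbb{P}\Bigl(\sum_i X_i \geq t\Bigr) \leq e^{-\lambda t}\prod_{i=1}^N \mathbb{E}\bigl[e^{\lambda X_i}\bigr],
\]
so the task reduces to controlling each MGF. Here I would expand the exponential as a power series and exploit the uniform bound $|X_i|\leq K$ together with the centering $\mathbb{E} X_i = 0$. Using $|X_i|^k \leq K^{k-2} X_i^2$ for $k\geq 2$ together with the numerical inequality $k! \geq 2\cdot 3^{k-2}$, the tail of the series telescopes into a geometric sum, yielding (for $0<\lambda<3/K$)
\[
\mathbb{E}\bigl[e^{\lambda X_i}\bigr] \leq 1 + \frac{\lambda^2 \mathbb{E}X_i^2}{2(1-\lambda K/3)} \leq \exp\!\Bigl(\frac{\lambda^2 \mathbb{E}X_i^2}{2(1-\lambda K/3)}\Bigr).
\]

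Taking the product over $i$ and recognizing $\sigma^2 = \sum_i \mathbb{E}X_i^2$, the exponent becomes $-\lambda t + \lambda^2\sigma^2/(2(1-\lambda K/3))$. I would then optimize over $\lambda$ by choosing $\lambda^* = t/(\sigma^2 + Kt/3)$, which lies in the admissible range $(0, 3/K)$ and after simplification produces precisely the exponent $-t^2/(2(\sigma^2+Kt/3))$. Finally, to obtain the two-sided statement I would repeat the entire argument applied to the variables $-X_i$ (which satisfy the same hypotheses with the same $\sigma^2$ and $K$) and union-bound the two tails, producing the factor of $2$ in front of the exponential.

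The main obstacle is the MGF estimate; the rest is bookkeeping. The subtle point is getting constants exactly right so that the geometric sum closes into the clean denominator $1-\lambda K/3$ that ultimately matches the $Kt/3$ term in the stated bound. A cleaner alternative would be to invoke a Bernstein-type subexponential Orlicz norm bound, but a direct series expansion with the $k!\geq 2\cdot 3^{k-2}$ trick is the most elementary route and keeps the constants explicit, so that is what I would write out.
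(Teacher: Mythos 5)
Your proposal is correct, but note that the paper does not prove this statement at all: Proposition~\ref{prop:bernstein} is imported as a standard preliminary, cited to Bernstein and to the textbooks of Vershynin and Boucheron et al., so there is no in-paper argument to compare against. What you have written is the canonical Chernoff--Cram\'er proof found in those references, and the details check out: the bound $\mathbb{E}|X_i|^k \leq K^{k-2}\mathbb{E}X_i^2$ combined with $k! \geq 2\cdot 3^{k-2}$ does close the series into $\frac{\lambda^2 \mathbb{E}X_i^2}{2(1-\lambda K/3)}$ for $0<\lambda<3/K$; the choice $\lambda^\ast = t/(\sigma^2+Kt/3)$ is admissible since $\lambda^\ast K < 3$, and substituting it gives exponent $-\lambda^\ast t + \frac{(\lambda^\ast)^2\sigma^2}{2(1-\lambda^\ast K/3)} = -\frac{t^2}{2(\sigma^2+Kt/3)}$ exactly; and the factor of $2$ from applying the one-sided bound to $\{-X_i\}$ and union-bounding is right. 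The only cosmetic gap is that you should state explicitly that the power-series expansion of $\mathbb{E}[e^{\lambda X_i}]$ and the interchange of expectation and sum are justified by the uniform bound $|X_i|\leq K$ (dominated convergence), and that $\mathbb{E}[X_i]=0$ is what kills the linear term before you bound the tail of the series. Otherwise the argument is complete and would serve as a self-contained proof of the proposition the paper chose to cite rather than prove.
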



\section{Exact Betweenness Centrality Computation}
\label{sec:exact}

\subsection{Recursive BC computation from the 2-core}

 Let \(G_0 = G, G_1, G_2, \ldots\) be a nested sequence of graphs such that \(G_i \supseteq G_{i+1}\) for all \(i \geq 0\), obtained by iteratively removing the degree 1 nodes.  
Let \(G_{i^\star}\) denote the 2-core of \(G\) (possibly empty). We denote these sets as \(V_1^{(0)}, V_1^{(1)}, \ldots\), where \(V_1^{(i)}\) is the set of degree 1 nodes in \(G_i\) that are removed to obtain \(G_{i+1}\). Let \(V_{\geq 2}^{(i)}\) be the set of nodes in \(G_i\) with degree at least 2. Define \(Y^{(i)}\) as the set of neighbors of degree 1 nodes in \(G_i\). For each \(u \in Y^{(i)}\), let \(\degree{u}{i}\) represent the number of its neighbors whose degree is 1. Denote by \(N^{(i)}(u)\) the set of neighbors of node \(u \in G_i\). Finally, let \(n_i = |V^{(i)}|, m_i = |E^{(i)}|\) be the number of nodes and edges in \(G_i\) for all $i$ respectively.

\begin{thm}
\label{thm:sigma}
Let $\sigma_{st}^{(i)}$ be the number of shortest paths between $s,t$ in $G_{i}$. Then the following recursive equations hold for all $s,t \in V, i\in \{0,\ldots, i^\star-1\}$ holds:  

\begin{align*}
\sigma_{st}^{(i)}&= 
\begin{cases}
\sigma_{st}^{(i+1)}, & \text{for all~~} s,t \in V_{\geq 2}^{(i)}  \\
\sigma_{sy}^{(i+1)}, & \text{for all~~} s \in V_{\geq 2}^{(i)}, t \in V_1^{(i)}  \text{~~where~~}  (y,t) \in G_{i }   \\
\sigma_{yt}^{(i+1)}, & \text{for all~~} s \in V_1^{(i)}, t \in V_{\geq 2}^{(i)} \text{~~where~~}  (s,y) \in G_{i} \\
\sigma_{yy'}^{(i+1)}, & \text{for all~~}   s,t \in V_1^{(i)} \text{~~where~~} (s,y), (y',t) \in G_{i}  
\end{cases}
\end{align*}
\end{thm}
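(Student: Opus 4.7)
The plan is to proceed via one structural observation and then dispatch the four cases essentially by a ``strip the leaf edge'' bijection. The pivot observation is the following: a vertex $u$ of degree $1$ in $G_i$ cannot appear as an interior vertex of any shortest path in $G_i$. Indeed, $u$ has a unique incident edge, so any walk entering $u$ must leave along the same edge, producing a repeated vertex and thus failing to be a shortest path. I would state and prove this as a short lemma first, since it underwrites every case.

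For Case 1, with $s,t\in V_{\geq 2}^{(i)}$, the observation implies that every shortest $s$--$t$ path in $G_i$ uses only vertices of degree at least $2$, hence only edges present in $G_{i+1}$. Conversely any $s$--$t$ path in $G_{i+1}$ is an $s$--$t$ path in $G_i$. Therefore $d^{(i)}(s,t)=d^{(i+1)}(s,t)$ and the two sets of shortest paths literally coincide, giving $\sigma_{st}^{(i)}=\sigma_{st}^{(i+1)}$. For Case 2, where $t\in V_1^{(i)}$ has unique neighbor $y$ and $s\in V_{\geq 2}^{(i)}$, every $s$--$t$ path in $G_i$ must use $(y,t)$ as its last edge; deleting that edge gives a length-decrementing bijection between $s$--$t$ paths and $s$--$y$ paths in $G_i$, so $\sigma_{st}^{(i)}=\sigma_{sy}^{(i)}$, and then Case 1 applied to the pair $(s,y)\in V_{\geq 2}^{(i)}\times V_{\geq 2}^{(i)}$ yields $\sigma_{sy}^{(i)}=\sigma_{sy}^{(i+1)}$. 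Case 3 is symmetric. For Case 4 with $s,t\in V_1^{(i)}$ and neighbors $y,y'$, I would apply the leaf-stripping bijection at both endpoints, obtaining $\sigma_{st}^{(i)}=\sigma_{yy'}^{(i)}$, and finish by Case 1 applied to $(y,y')$.

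The main obstacle is not conceptual but bookkeeping: the argument in Case 2 (and analogously Case 4) quietly uses that the neighbor $y$ of a leaf lies in $V_{\geq 2}^{(i)}$, which can fail only when the connected component of $t$ in $G_i$ is the single edge $\{t,y\}$; in that situation the component disappears from $G_{i+1}$ entirely, so the statement is vacuous on same-component pairs. I would address this in a one-sentence preamble, assuming the pairs $s,t$ considered survive in some form to the 2-core (equivalently, they lie in a component of $G_i$ whose intersection with $G_{i+1}$ is nonempty), which is the only regime in which the recursion is actually invoked. The sub-cases $y=s$, $y'=t$, or $y=y'$ are handled uniformly by the standard convention $\sigma_{vv}^{(i+1)}=1$ for length-zero paths, so the statement remains intact in those boundary situations.
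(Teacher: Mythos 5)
Your proposal is correct and follows essentially the same route as the paper's proof: the key observation that a degree-1 vertex cannot be an interior vertex of any shortest path, followed by the leaf-edge-stripping bijections in each of the four cases. Your extra care about the boundary situation where a leaf's unique neighbor is itself a leaf (a two-vertex component) is a point the paper's proof passes over silently, but it does not change the substance of the argument.
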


\begin{proof}
The proof   relies on the fundamental property that the only way to reach or exit a degree 1 vertex is through its unique neighbor. We consider the following cases. 

\noindent \underline{Case I: $s,t \in V_{\geq 2}^{(i)}$.}
Notice that if $s,t \in V_{\geq 2}^{(i)}$ they also appear in $G_{i+1}$. The sets of shortest paths between $s,t$ in $G_i, G_{i+1}$ are identical, as no $s-t$ shortest paths in $G_i$ go through nodes from $V_1^{(i)}$, but only through nodes in $V_{\geq 2}^{(i)}$.

\noindent  \underline{Case II: $s \in V_{\geq 2}^{(i)}, t \in V_1^{(i)}$.} Let \(y \in V_{\geq 2}^{(i)}\) be the unique neighbor of \(t\) in \(G_i\). Observe that each shortest path from \(s\) to \(y\) corresponds to a shortest path from \(s\) to \(t\), and no other such paths can exist. Consequently, we have \(\sigma_{st}^{(i)} = \sigma_{sy}^{(i+1)}\).

\noindent  \underline{Case III: $s \in V_1^{(i)}, t \in V_{\geq 2}^{(i)}$.} Arguing  similarly to case II,  \(\sigma_{st}^{(i)} = \sigma_{yt}^{(i+1)}\).

\noindent 
\underline{Case IV: $s,t \in  V_1^{(i)}$.} There exists a bijection between the shortest paths from $s$ to $t$ in $G_i$ and the shortest paths from $y$ to $y'$ in $G_{i+1}$ where $y,y'$ are the neighbors of $s,t$ in $G_i$ by appending the edges $(s,y)$ and $(y',t)$ to each $y-y'$ shortest path. 
\end{proof}

\begin{cor}
\label{cor:sigmastu}
Let $\sigma_{st}^{(i)}(u)$ be the number of shortest paths between $s,t$ in $G_{i}$ that pass through $u \in V_{\geq 2}^{(i)}$. Then, the following holds: 

\begin{align*}
\sigma_{st}^{(i)}(u)&= 
\begin{cases}
\sigma_{st}^{(i+1)}(u), & \text{for all~~} s,t \in V_{\geq 2}^{(i)}  \\
\sigma_{sy}^{(i+1)}(u), & \text{for all~~} s \in V_{\geq 2}^{(i)}, t \in V_1^{(i)}  \text{~~where~~}  (y,t) \in G_{i }   \\
\sigma_{yt}^{(i+1)}(u), & \text{for all~~} s \in V_1^{(i)}, t \in V_{\geq 2}^{(i)} \text{~~where~~}  (s,y) \in G_{i} \\
\sigma_{yy'}^{(i+1)}(u), & \text{for all~~}   s,t \in V_1^{(i)} \text{~~where~~} (s,y), (y',t) \in G_{i}  
\end{cases}
 \end{align*}
\end{cor}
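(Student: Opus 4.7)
The plan is to show that the corollary follows by revisiting each of the four cases in the proof of Theorem~\ref{thm:sigma} and verifying that the bijections constructed there preserve the property of passing through $u$. The key observation I will rely on is that $u \in V_{\geq 2}^{(i)}$, so $u$ survives into $G_{i+1}$, and $u$ is never the endpoint that is being replaced by its unique degree-1 neighbor.

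First, in Case I, where $s,t \in V_{\geq 2}^{(i)}$, the sets of $s$--$t$ shortest paths in $G_i$ and $G_{i+1}$ are literally identical, as argued in the theorem. So the subset that passes through $u$ is also identical, yielding $\sigma_{st}^{(i)}(u) = \sigma_{st}^{(i+1)}(u)$. Next, for Cases II and III, where exactly one of $s,t$ is in $V_1^{(i)}$, I will re-use the bijection that strips off (respectively, prepends) the unique edge $(t,y)$ or $(s,y)$ incident to the degree-1 endpoint. Under this bijection a path in $G_i$ visits $u$ as an internal vertex if and only if its image in $G_{i+1}$ visits $u$, since $u$ is distinct from both $s,t$ and from the degree-1 neighbor $y$ (recall $u$ has degree at least $2$ in $G_i$). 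Finally, in Case IV, where both $s,t \in V_1^{(i)}$, the bijection removes both pendant edges $(s,y)$ and $(y',t)$, and again this surgery is confined to the two endpoints, so it preserves passage through any internal vertex $u \in V_{\geq 2}^{(i)}$.

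I would conclude by observing that in every case the bijection on $s$--$t$ shortest paths restricts to a bijection on those passing through $u$, which establishes the stated equalities case by case, exactly mirroring the four-case structure of Theorem~\ref{thm:sigma}. I do not anticipate a genuine obstacle here: the only subtlety is making explicit that $u$ is distinct from the degree-1 endpoints being spliced in or out, which is immediate from $u \in V_{\geq 2}^{(i)}$. Consequently the proof can be written as a short remark that invokes Theorem~\ref{thm:sigma}'s bijections and notes that they act as the identity on internal vertices.
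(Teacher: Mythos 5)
Your proposal is correct and matches the paper's approach: the paper omits the proof precisely because it is obtained by restricting the bijections of Theorem~\ref{thm:sigma} to the paths passing through $u$, noting as you do that $u \in V_{\geq 2}^{(i)}$ survives into $G_{i+1}$ and is untouched by the endpoint surgery. Nothing further is needed.
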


We omit the proof of the corollary as it is identical to Theorem~\ref{thm:sigma} and establishes the same type of bijection between   two  respective sets of shortest paths in $G_i, G_{i+1}$. Notice that  we consider $u \in V_{\geq 2}^{(i)}$, since if $u \in V_1^{(i)}$ the total number of shortest paths that pass through it is zero in $G_i$. Having established Theorem~\ref{thm:sigma} and Corollary~\ref{cor:sigmastu} we proceed to the following key theorem that establishes a connection between the betweenness centrality scores \(\bc[i]{u}\) and \(\bc[i+1]{u}\) for graphs \(G_i\) and \(G_{i+1}\), respectively.
 
\begin{thm} 
\label{thm:rec}
Let $\bc[i]{u}$ be the betweenness centrality of node $u$ in $G_i$ for $i=0,\ldots,i^\star$ such that $G_{i^\star}$ is the maximal 2-core of $G$. Then, the following recursive equations hold for all $i$:

\begin{empheq}[box=\fbox]{align*}
\bc[i]{u}  (n_i - 1) (n_i-2) &= \bc[i+1]{u}  (n_{i+1} - 1) (n_{i+1}-2)+   \\
&  \degree{u}{i} ( \degree{u}{i} -1) + \\ 
& 2\cdot  \degree{u}{i}  \cdot \big(n_i - ( \degree{u}{i} +1) \big) +  \\
&  2 \cdot \sum_{ \substack{y \in V_{\geq 2}^{(i)} \\ y\neq u} } \sum_{\substack{y' \in Y^{(i)} \\ y' \neq y \neq u}}  \degree{y}{i}    \frac{\sigma_{yy'}^{(i+1)}(u)}{\sigma_{yy'}^{(i+1)}} + \\
&  \sum_{ \substack{y \in Y^{(i)} \\ y\neq u} } \sum_{\substack{y' \in Y^{(i)} \\ y' \neq y \neq u}}  \degree{y}{i} \cdot \degree{y'}{i} 
 \cdot \frac{\sigma_{yy'}^{(i+1)}(u)}{\sigma_{yy'}^{(i+1)}} \\ 
& \forall u \in V, i \in \{0,\ldots,i^\star-1\}
\end{empheq}

\end{thm}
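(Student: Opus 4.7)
The plan is to expand the left-hand side via the defining formula $\bc[i]{u}(n_i-1)(n_i-2) = \sum_{s \neq t,\, s,t \neq u} \delta_{st}^{(i)}(u)$, with $\delta_{st}^{(i)}(u) = \sigma_{st}^{(i)}(u)/\sigma_{st}^{(i)}$, and then partition the ordered pairs $(s,t)$ according to whether each endpoint sits in $V_{\geq 2}^{(i)}$ or in $V_1^{(i)}$. The four resulting subsums mirror the four cases of Theorem~\ref{thm:sigma} and Corollary~\ref{cor:sigmastu}. The crucial observation that allows the transfer to $G_{i+1}$ is that the ratio $\sigma_{st}^{(i)}(u)/\sigma_{st}^{(i)}$ is preserved under the bijection used in those results, because numerator and denominator move in parallel when a degree-$1$ endpoint is replaced by its unique neighbor. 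Throughout one may assume $u \in V_{\geq 2}^{(i)}$, since $\bc[i]{u} = 0$ whenever $u \in V_1^{(i)}$.

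The subsum over $s,t \in V_{\geq 2}^{(i)}$ becomes, by Corollary~\ref{cor:sigmastu}, $\sum \delta_{st}^{(i+1)}(u) = \bc[i+1]{u}(n_{i+1}-1)(n_{i+1}-2)$, recovering the first term on the right-hand side. The remaining work is careful bookkeeping in the three subsums in which at least one endpoint lies in $V_1^{(i)}$. The transfer via Corollary~\ref{cor:sigmastu} is legitimate only when the relevant neighbor $y$ (or $y'$) of the degree-$1$ endpoint is different from $u$; when it equals $u$, every shortest path between $s$ and $t$ is forced through $u$, so $\delta_{st}^{(i)}(u) = 1$. These unit contributions must be pulled out and counted directly, and they produce exactly the two polynomial terms. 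Specifically, ordered pairs of two distinct degree-$1$ neighbors of $u$ contribute $\degree{u}{i}(\degree{u}{i}-1)$, and ordered pairs in which exactly one endpoint is a degree-$1$ neighbor of $u$ while the other is any vertex outside $\{u\}$ and outside the set of degree-$1$ neighbors of $u$ contribute $2\cdot\degree{u}{i}\cdot(n_i-\degree{u}{i}-1)$, the factor $2$ capturing the swap of $s$ and $t$.

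The generic contributions, where every relevant neighbor is distinct from $u$, are collected by grouping each degree-$1$ endpoint through its unique neighbor in $Y^{(i)}$. For the two mixed subcases $s\in V_{\geq 2}^{(i)},\, t\in V_1^{(i)}$ and its mirror image, each $t$ with neighbor $y' \in Y^{(i)}$ can be picked in $\degree{y'}{i}$ ways, and the corollary rewrites $\delta_{st}^{(i)}(u)$ as $\delta_{yy'}^{(i+1)}(u)$ with $y=s$; combining the two symmetric subcases using $\delta_{yy'}^{(i+1)}(u)=\delta_{y'y}^{(i+1)}(u)$ in the undirected setting produces the fourth term with its overall factor $2$. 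For the subcase $s,t\in V_1^{(i)}$ with neighbors $y,y'\neq u$, there are $\degree{y}{i}$ choices of $s$ and $\degree{y'}{i}$ choices of $t$, and the corollary gives $\delta_{st}^{(i)}(u)=\delta_{yy'}^{(i+1)}(u)$, producing the fifth term; the restriction $y'\neq y$ is harmless, because if $y=y'$ then all $s$–$t$ shortest paths meet only $y$, so $\delta_{st}^{(i)}(u)=0$ whenever $y\neq u$.

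The main obstacle is precisely this accounting: one must verify that the union of the ``special'' cases ($y=u$ or $y'=u$, producing $\delta=1$) with the ``generic'' transferred sums covers every ordered pair $(s,t)$ with $s\neq t,\,s,t\neq u$ exactly once, and that all side conditions (such as $s\neq y'$, $y'\neq y$, and the various $\neq u$ clauses) are mutually consistent. A direct case analysis on the location of $y$ and $y'$ relative to $u$, together with the harmless zero-contribution conventions noted above, closes the argument term by term; no ingredient beyond Theorem~\ref{thm:sigma} and Corollary~\ref{cor:sigmastu} is required.
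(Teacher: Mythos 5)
Your proposal is correct and takes essentially the same route as the paper's proof: expand $\bc[i]{u}(n_i-1)(n_i-2)$ as a sum of pair-dependencies, peel off the $V_{\geq 2}^{(i)}\times V_{\geq 2}^{(i)}$ subsum to get the $\bc[i+1]{u}$ term, handle the degree-$1$ neighbors of $u$ separately as unit dependencies (yielding the two polynomial terms), and transfer the remaining pairs to $G_{i+1}$ via Theorem~\ref{thm:sigma} and Corollary~\ref{cor:sigmastu} by grouping each degree-$1$ endpoint through its unique neighbor in $Y^{(i)}$. Your reorganization (partitioning first by $V_1^{(i)}$ versus $V_{\geq 2}^{(i)}$ and then extracting the $y=u$ special cases) is only a cosmetic variant of the paper's Cases I--IV, and your explicit remarks on the $y=y'$ degeneracy and the symmetry factor $2$ match the paper's.
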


\begin{proof}
Consider an arbitrary index $i \in \{0,\ldots,i^\star-1\}$ and the two graphs $G_i, G_{i+1}$ which differ by the vertex set $V_1^{(i)}$ removed from $G_i$. By the definition~\eqref{def:bc}, notice that the quantity $\bc[i]{u}  (n_i - 1) (n_i-2) $ is equal to

\begin{align*}
 \bc[i]{u}  (n_i - 1) (n_i-2) &=  
\sum_{\substack{s, t \in V^{(i)} \\ s \not= t \not= u}} \delta_{st}^{(i)}(u) =
 \sum_{\substack{s, t \in V_{\geq 2}^{(i+1)} \\ s \not= t \not= u}} \delta_{st}^{(i+1)}(u)
 + Z  \\
 & = \bc[i+1]{u}  (n_{i+1} - 1) (n_{i+1}-2)+Z,
\end{align*}

\noindent where $Z$ is the term representing the extra contributions to $\bc[i]{u}  (n_i - 1) (n_i-2)$ from paths involving nodes from the set of  $V_1^{(i)}$ that does not exist in $G_{i+1}$.  We break down the contributions to this term by considering all possible cases of sources and targets.

\underline{Case I: $s,t \in  N_1^{(i)}(u)$} Let us consider the scenario when both the source and the destination nodes \(s\) and \(t\) are degree 1 nodes that are neighbors of $u$ and were removed from \(G_i\), i.e., \(s, t \in N(u) \cap V_1^{(i)} = N_1^{(i)}(u)\). In this case, there exist \(\degree{u}{i} \cdot (\degree{u}{i} - 1)\) pairs of such nodes, each pair being connected via a unique shortest path passing through \(u\). Consequently, this set of pairs contributes \(\degree{u}{i} \cdot (\degree{u}{i} - 1)\) to \(Z\).

\underline{Cases II(a), II(b): \( s \in N_1^{(i)}(u) \) and \( t \in V^{(i)} \setminus N_1^{(i)}(u) \), \textit{or vice versa}.} In this scenario   there exists \\ $\degree{u}{i} $ choices for the source \(s \in N_1^{(i)}(u)\) and $\big(n_i - (\degree{u}{i} + 1)\big)$ choices for the destination node \\ \(t \in V^{(i)} \setminus \big(u \cup N_1^{(i)}(u)\big)\), yielding in total  $$ \degree{u}{i} \cdot \big(n_i - (\degree{u}{i} + 1)\big) $$ 

\noindent possible source-target pairs \(s, t\). For each such pair, the pair-dependency is equal to 1 since all shortest paths from \(s\) to \(t\) in $G_i$ pass through \(u\). The symmetric case where \(t \in N_1^{(i)}(u)\) and \(s \in V^{(i)} \setminus \big(u \cup N_1^{(i)}(u)\big)\) yields an equal term. Therefore, the total contribution is $ 2 \cdot \degree{u}{i} \cdot \big(n_i - (\degree{u}{i} + 1)\big). $

\underline{Cases III(a) and III(b): $s \in V_{\geq 2}^{(i)}$, $t \in V_1^{(i)} \backslash N_1^{(i)}(u)$, \textit{or vice versa}.  } Consider the case $s \in V_{\geq 2}^{(i)}$ and $t \in V_1^{(i)} \backslash N_1^{(i)}(u)$.  By applying Theorem~\ref{thm:sigma} we obtain that  contribution of such pairs is equal to

\begin{align*}
\sum_{ \substack{y \in V_{\geq 2}^{(i)} \\ y\neq u} } \sum_{t \in V_1^{(i)} \backslash N_1^{(i)}(u)} \frac{\sigma_{yt}^{(i)}(u)}{\sigma_{yt}^{(i)}} &=\sum_{ \substack{y \in V_{\geq 2}^{(i)} \\ y\neq u} } \sum_{\substack{y' \in Y^{(i)} \\ y' \neq y \neq u}}  \degree{y}{i}    \frac{\sigma_{yy'}^{(i+1)}(u)}{\sigma_{yy'}^{(i+1)}},
\end{align*}

\noindent where $Y^{(i)}$ is the set of nodes in $G_i$ that are adjacent to $V_1^{(i)}$.  Similarly, by considering the case $s \in V_1^{(i)} \backslash N_1^{(i)}(u)$, $t \in V_{\geq 2}^{(i)}$ we obtain another equal contribution term. This leads to the overall contribution term  $ 2 \cdot \sum_{ \substack{y \in V_{\geq 2}^{(i)} \\ y\neq u} } \sum_{\substack{y' \in Y^{(i)} \\ y' \neq y \neq u}}  \degree{y}{i}    \frac{\sigma_{yy'}^{(i+1)}(u)}{\sigma_{yy'}^{(i+1)}}$.

\underline{Case IV: $s,t \in V_1^{(i)}\backslash N_1^{(i)}(u)$. } Consider $s,t \in V_1^{(i)}\backslash N_1^{(i)}(u)$.  Let $y, y'$ be the neighbors of $s,t$ respectively. Notice that $y \neq y'$ otherwise the shortest path from $s,t$ does not pass through $u$ but through $y$.  By rearranging the   sum of the contribution of these pairs, applying Theorem~\ref{thm:sigma}, grouping according to the set $Y^{(i)}$ and observing that for each pair $y,y' \in V_{\geq 2}^{(i)}$ there exist $\degree{y}{i} \cdot \degree{y'}{i} $ terms equal to 
$\frac{\sigma_{yy'}^{(i+1)}(u)}{\sigma_{yy'}^{(i+1)}}$ 
we obtain:

$$ \sum_{s \in V_1^{(i)}\backslash N_1^{(i)}(u)
} \sum_{t \in V_1^{(i)}\backslash N_1^{(i)}(u)} \frac{\sigma_{st}^{(i)}(u)}{\sigma_{st}^{(i)}} = \sum_{ \substack{y \in Y^{(i)} \\ y\neq u} } \sum_{\substack{y' \in Y^{(i)} \\ y' \neq y \neq u}}  \degree{y}{i} \cdot \degree{y'}{i} 
 \cdot \frac{\sigma_{yy'}^{(i+1)}(u)}{\sigma_{yy'}^{(i+1)}}.$$

\noindent Summing the contributions from these cases results in the theorem statement.

\end{proof}

\noindent Theorem~\ref{thm:rec} naturally implies an algorithm that iteratively peels the degree 1 nodes until it is no longer possible. 
Suppose that we compute  $\sigma^{(i^\star)}_{st}, \sigma_{st}^{(i^\star)}(u)$ in  $G_{i^\star}$ for all $s,t,u \in G_{i^\star}$. Notice that this information allows us to compute the betweenness centrality scores $\bc[i^\star]{u}$ for all $u \in G_{i^\star}$. Then, we apply recursively from index $i^\star-1$ down to $i=0$ Theorems~\ref{thm:sigma}, \ref{thm:rec} and Corollary~\ref{cor:sigmastu} to compute the betweenness centrality scores in graphs $G_{i^\star-1}, \ldots,G_0$.   A straight-forward implementation, performs an APSP computation in the 2-core and then  updates the $\sigma$-quantities using Theorem~\ref{thm:sigma} and Corollary~\ref{cor:sigmastu} from $i+1$ to $i$ ($i=1,\ldots,i^\star-1$). Notice we need to update $O(|V_1^{(i)}|^2+|V_{\geq 2}^{(i)}||V_1^{(i)}|)$ quantities in constant time.   Given the algorithm's heightened sensitivity $O(t_{APSP}(G_{i^\star})+\sum_{i=1}^{i^\star-1} O(|V_1^{(i)}|^2+|V_{\geq 2}^{(i)}||V_1^{(i)}|)$ to the properties of its inputs, this nuanced expression is anticipated.


     


\subsection{ BC from 1-round of peeling}

 We will focus on a single iteration of peeling degree-1 nodes and explore an effective implementation strategy.   There is an important practical reason why we focus  on this case, besides obtaining a less nuanced time complexity: most of the iterative removal of degree-1 nodes across various networks typically occurs during the initial round.  Specifically, consider the  set of social, collaboration and communication networks publicly available from SNAP~\cite{snap} described in the Appendix~\ref{sec:appendix}.  On average, the number of peeling rounds is 3.89 with a standard deviation of 2.08, as detailed in the histogram in Figure~\ref{fig:combined}(a). Despite the relatively small average number of peeling rounds, the first round consistently results in the largest reduction in the number of nodes across all networks, as shown in Figure~\ref{fig:combined}(b). For datasets with multiple peeling rounds, the ratio of $\frac{V_1^{(0)}}{V_1^{(1)}}$ has an average of 67.82, a standard deviation of 113.56, and maximum and minimum values of 446.53 and 9.4, respectively.

\begin{figure}[h!]
    \centering
    \begin{tabular}{cc}
        \includegraphics[width=0.34\textwidth]{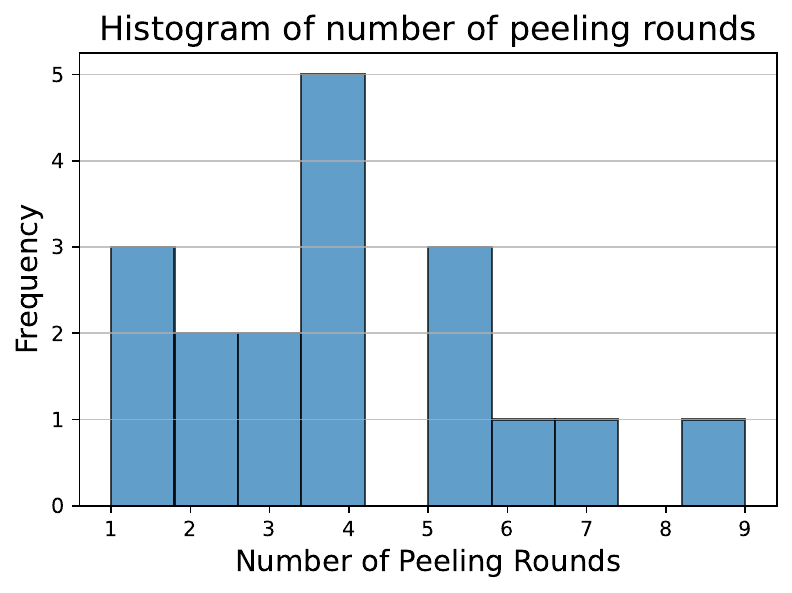} &
        \includegraphics[width=0.55\textwidth]{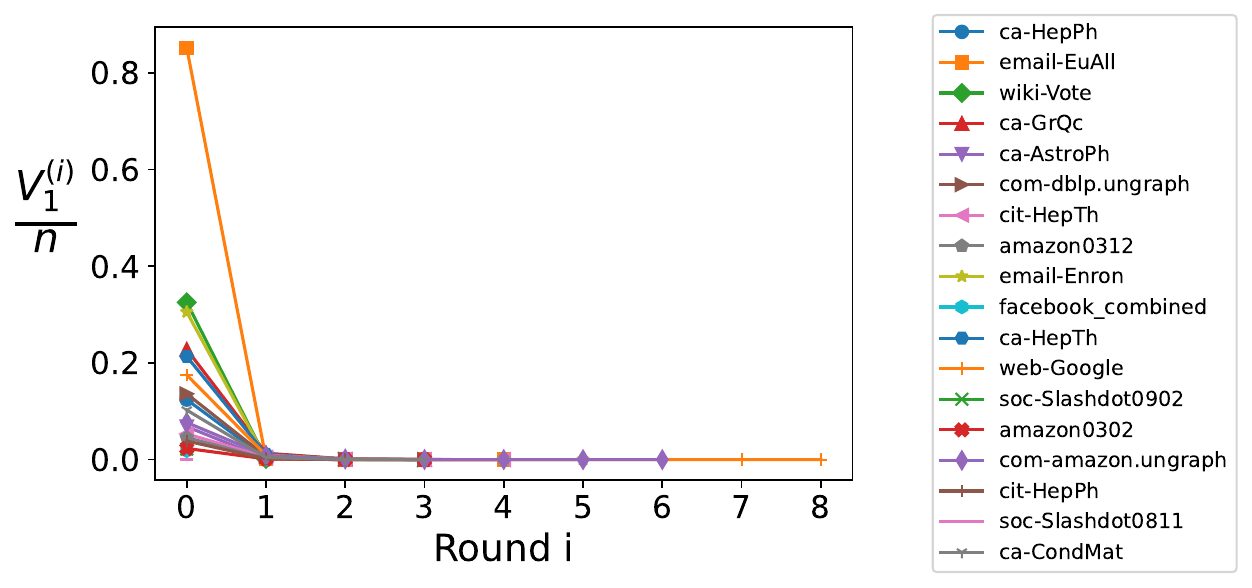} 
        \\
    \end{tabular}
    \caption{   (a) Histogram of the number of peeling rounds.  
          (b) Fraction of degree-1 nodes $\frac{V_1^{(i)}}{n}$ removed in each round $i$ for each dataset.}
    \label{fig:combined}
\end{figure}

\noindent Algorithm~\ref{alg:oneround} illustrates the adaptation of Brandes' algorithm to handle the removal of $V_1^{(0)} := V_1$.   We refer to the remaining set of vertices as $V_{\geq 2}$, and the graph induced by $V_{\geq 2}$ is denoted as $\tilde{G}=G^{(1)}$.
Our algorithm introduces in addition to the $\delta$ values from which the BC scores can be directly derived, a set of  $\zeta$ values.  In particular, for any $s,u \in V_{\geq 2}$ we define 

$$ \zeta_s(u) = \sum_{\substack{t \in V_{\geq 2} \\ s \neq t \neq u  }} deg_1(t) \cdot \frac{\sigma_{st}^{(1)}(u)}{\sigma_{st}^{(1)}}.$$

Our algorithm computes in lines 8-11 the $\delta,\zeta$ quantities in $\tilde{G}$, by invoking the function $\_${\sc Accumulate}. Then it considers the original input graph, by updating the $\delta$ values appropriately to capture the effect of the removed set $V_1$. We prove the correctness of $\_${\sc Accumulate} in the next theorem and the algorithm correctness in Theorem~\ref{thm:correctness}.

 \begin{thm}
 \label{thm:reczeta}
The $\zeta$ values satisfy the following recursive equation: 
\[
\zeta_{s}(u) = \sum_{w: u \in P_s(w)} 
\frac{\sigma_{sv}}{\sigma_{sw}} \cdot (deg_1(w) + \zeta_{s}(w)).
\]
\end{thm}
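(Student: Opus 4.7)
The plan is to replicate Brandes' derivation of~\eqref{eq:deltaseq}, the one wrinkle being that each target $t$ now contributes with weight $deg_1(t)$ instead of $1$. Starting from the definition
$$\zeta_s(u) = \sum_{\substack{t \in V_{\geq 2} \\ t \neq s,\, t \neq u}} deg_1(t) \cdot \frac{\sigma_{st}^{(1)}(u)}{\sigma_{st}^{(1)}},$$
I would partition the shortest $s$-$t$ paths through $u$ in $\tilde G$ by the immediate successor $w$ of $u$ on such a path. Any such $w$ satisfies $(u,w)\in E(\tilde G)$ and $u \in P_s(w)$, so the natural outer summation runs over $\{w : u \in P_s(w)\}$ and the two sums can be swapped.

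Next I would split the inner sum into the cases $t = w$ and $t \neq w$. When $t = w$, every shortest $s$-$w$ path through $u$ must use the edge $(u,w)$ last, so $\sigma_{sw}^{(1)}(u) = \sigma_{su}^{(1)}$ and the contribution from this successor is $deg_1(w) \cdot \sigma_{su}^{(1)}/\sigma_{sw}^{(1)}$. When $t \neq w$ the identity
$$\frac{\sigma_{su}^{(1)}\,\sigma_{wt}^{(1)}}{\sigma_{st}^{(1)}} \;=\; \frac{\sigma_{su}^{(1)}}{\sigma_{sw}^{(1)}}\cdot\frac{\sigma_{sw}^{(1)}\sigma_{wt}^{(1)}}{\sigma_{st}^{(1)}} \;=\; \frac{\sigma_{su}^{(1)}}{\sigma_{sw}^{(1)}}\cdot \frac{\sigma_{st}^{(1)}(w)}{\sigma_{st}^{(1)}}$$
(valid whenever $w$ lies on a shortest $s$-$t$ path, using $\sigma_{st}^{(1)}(w)=\sigma_{sw}^{(1)}\sigma_{wt}^{(1)}$) lets me factor out $\sigma_{su}^{(1)}/\sigma_{sw}^{(1)}$, and the remaining sum over $t$ collapses to $\sum_{t} deg_1(t)\,\delta_{st}^{(1)}(w) = \zeta_s(w)$, since terms with $w$ off every shortest $s$-$t$ path contribute zero automatically.

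The main obstacle is bookkeeping around the index restrictions. The definition of $\zeta_s(w)$ sums over $t \neq s, t \neq w$, whereas after the partition the natural index set is $t \neq s, t \neq u, t \neq w$. The ``missing'' term $t = u$ must contribute zero for the identification with $\zeta_s(w)$ to be exact; this is precisely where the DAG structure is used: since $u \in P_s(w)$ forces $d(s,w) = d(s,u) + 1 > d(s,u)$, the vertex $w$ cannot lie on any shortest $s$-$u$ path, so $\delta_{su}^{(1)}(w) = 0$ and the discrepancy vanishes. Summing the $t=w$ and $t \neq w$ contributions over all $w$ with $u \in P_s(w)$ then yields the claimed recursion, in direct parallel with Brandes' proof of~\eqref{eq:deltaseq}.
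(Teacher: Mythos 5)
Your proposal is correct and follows essentially the same route as the paper: both partition the shortest paths through $u$ by the successor edge $\{u,w\}$ (equivalently, by the vertices $w$ with $u \in P_s(w)$), split into the cases $t = w$ and $t \neq w$, and use the multiplicativity $\sigma_{st}(w) = \sigma_{sw}\sigma_{wt}$, which is exactly the edge-extended pair-dependency of Brandes that the paper invokes directly. Your explicit check that the excluded index $t = u$ contributes nothing (since $u \in P_s(w)$ forces $d(s,w) > d(s,u)$, so $w$ lies on no shortest $s$--$u$ path) is a bookkeeping detail the paper's proof passes over silently, but it is the same argument.
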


\begin{proof}
 
Let $\sigma_{st}(u, e)$ be the number of shortest paths from $s$ to $t$ that contain both node $u$ and edge $e$. Brandes defined a convenient  extension the  pair-dependency~\cite{brandes2001faster} $\delta_{st}(u, \{u, w\})=\frac{\sigma_{st}(u, e)}{\sigma_{st}}$, that includes an edge $e \in E$ and proved  that


\[
\delta_{st}(u, \{u, w\}) = \begin{cases}
\frac{\sigma_{su}}{\sigma_{sw}} & \text{if } t = w \\
\frac{\sigma_{su}}{\sigma_{sw}} \cdot \frac{\sigma_{st}(u)}{\sigma_{st}} & \text{if } t \neq w
\end{cases}
\]

\noindent 
With the above, we  obtain the following expression: 

\begin{align*}
\zeta_{s}(u) &= \sum_{t \in V_{\geq 2}} deg_1(t) \frac{\sigma_{st}(u)}{\sigma_{st}}  = \sum_{t \in V_{\geq 2}} 
\sum_{\substack{w \in V_{\geq 2}:\\ u \in P_s(w)}}
\deg_1(t) \cdot \frac{\sigma_{st}(u, \{u,w\})}{\sigma_{st}}  \\  
&=\sum_{\substack{w \in V_{\geq 2}:\\ u \in P_s(w)}}
\sum_{t \in V_{\geq 2}}  deg_1(t) \cdot \frac{\sigma_{st}(u, \{u,w\})}{\sigma_{st}}=\sum_{\substack{w \in V_{\geq 2}:\\ u \in P_s(w)}}
\sum_{t \in V_{\geq 2}}    \zeta_{st}(u, \{u, w\}) \\
&= \sum_{\substack{w \in V_{\geq 2}:\\ u \in P_s(w)}} \Big(  deg_1(w) \frac{\sigma_{su}}{\sigma_{sw}} +  \sum_{t \in V_{\geq 2} \setminus \{w\}} deg_1(t) \frac{\sigma_{su}}{\sigma_{sw}} \cdot \frac{\sigma_{st}(w)}{\sigma_{st}}   \Big) \\ 
&= \sum_{\substack{w \in V_{\geq 2}:\\ u \in P_s(w)}}  \frac{\sigma_{su}}{\sigma_{sw}} \Big(  deg_1(w) + \zeta_s(w) \Big).
\end{align*}

\end{proof}

\noindent We present the key corollary of Theorem~\ref{thm:reczeta} and Theorem 6 from~\cite{brandes2001faster}.

\begin{cor}
Function $\_${\sc Accumulate}   computes the $\delta, \zeta$ quantities in $\tilde{G}$ in $O(\tilde{n}\tilde{m})$ time, where $\tilde{n}$ and $\tilde{m}$ are the number of nodes and edges respectively in $\tilde{G}$.
\end{cor}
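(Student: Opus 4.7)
The plan is to establish the corollary by combining two ingredients: the correctness of each recurrence (one for $\delta$, one for $\zeta$), and the standard cost analysis of a Brandes-style two-phase computation on $\tilde G$. Because $\_${\sc Accumulate} is a straightforward augmentation of the bottom-up accumulation phase of Brandes' algorithm with one extra per-node scalar, the entire argument reduces to checking that (i) the traversal order is compatible with both recurrences, and (ii) the extra bookkeeping does not change the asymptotic cost per source.

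First, for each source $s \in V_{\geq 2}$, I would argue correctness of the $\delta$-values by invoking Theorem 6 of Brandes~\cite{brandes2001faster} verbatim: the first phase of $\_${\sc Accumulate} performs a BFS rooted at $s$ on $\tilde G$, which records the shortest-path DAG, the distances $d(s,\cdot)$, the counts $\sigma_{s\cdot}$, and the predecessor sets $P_s(\cdot)$; the second phase then processes vertices in order of non-increasing distance from $s$ and updates $\delta_s(v)$ via Brandes' recurrence~\eqref{eq:deltaseq}. This ordering guarantees that whenever a vertex $u$ is processed, every $w$ with $u \in P_s(w)$ has already been finalized, which is exactly what the recurrence requires.

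Next, for the $\zeta$-values, the proof step is to observe that Theorem~\ref{thm:reczeta} has the same dependency structure as Brandes' recurrence: $\zeta_s(u)$ depends only on $\zeta_s(w)$ for $w$ with $u \in P_s(w)$, and all such $w$ satisfy $d(s,w) = d(s,u)+1$. Hence the same bottom-up ordering used to compute $\delta_s(\cdot)$ also validly computes $\zeta_s(\cdot)$. Since $\deg_1(w)$ is a precomputed scalar attached to each vertex of $\tilde G$, the $\zeta$ update can be performed in parallel with the $\delta$ update during the same traversal: when processing an edge from $u$ to a successor $w$ in the DAG, we simply add $\frac{\sigma_{su}}{\sigma_{sw}}(\deg_1(w)+\zeta_s(w))$ to $\zeta_s(u)$, exactly as the theorem prescribes.

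Finally, for the running time, the analysis is the standard one: each source contributes one BFS (time $O(\tilde n + \tilde m)$) plus a bottom-up pass that touches each DAG edge once (time $O(\tilde m)$), and the added constant-time-per-edge work for the $\zeta$ accumulator does not inflate the bound. Summing over the $\tilde n$ sources gives $O(\tilde n \tilde m)$, since $\tilde m \geq \tilde n - 1$ in any connected component on which the traversal is non-trivial. The only subtle point — and the one I would flag as the main obstacle if it were not already handled by Theorem~\ref{thm:reczeta} — is justifying that the bottom-up order is simultaneously correct for both recurrences; fortunately this is immediate from the shared DAG structure, so no genuinely new combinatorial argument is needed.
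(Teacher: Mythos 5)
Your proposal is correct and follows essentially the same route as the paper, which states this result without a written proof precisely because it is the immediate combination of Theorem~\ref{thm:reczeta} (the $\zeta$ recurrence shares the dependency structure of Brandes' $\delta$ recurrence, so the same non-increasing-distance order computes both) with Brandes' Theorem~6 run on $\tilde{G}$. Your extra observations --- that the $\zeta$ update adds only constant work per DAG edge and that the per-source cost is $O(\tilde{n}+\tilde{m})$ --- are exactly the details the paper leaves implicit.
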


\noindent We are now prepared to present the correctness proof of our proposed algorithm. We will show that the $\delta^{(1)}$ and $\zeta$ values from $\tilde{G}$ are correctly utilized to update the $\delta^{(0)}$ values in the original graph $G$. A key observation is that $\delta_s^{(1)}(u)=0$ for all $u \in V_1$. This indicates that the columns of the $\delta$ matrix are entirely zeros, meaning they do not need to be updated or even stored, saving space. Notice that even if we are updating only   columns of the $\delta$ matrix corresponding   to nodes $u \in V_{\geq 2}$ (line 13) we are potentially updating all rows (line 12).

\begin{thm}
\label{thm:correctness}
Algorithm~\ref{alg:oneround} computes exactly the betweenness centrality scores in the input graph $G$.  
\end{thm}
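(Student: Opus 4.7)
The plan is to reduce the theorem to Theorem~\ref{thm:rec} specialized to $i=0$. That recursion writes $\bc[0]{u}(n-1)(n-2)$ as a sum of five pieces, so it suffices to show that Algorithm~\ref{alg:oneround} accumulates exactly those five pieces into the counter associated with $u$ before the final division by $(n-1)(n-2)$. The first piece, $\bc[1]{u}(n_1-1)(n_1-2)=\sum_{s\in V_{\geq 2},\,s\neq u}\delta_s^{(1)}(u)$, is what \_{\sc Accumulate} produces on $\tilde G$ by the classical Brandes identity~\eqref{eq:deltaseq} (one BFS per source in $V_{\geq 2}$). The two polynomial correction terms $\degree{u}{0}(\degree{u}{0}-1)$ and $2\,\degree{u}{0}\bigl(n-(\degree{u}{0}+1)\bigr)$ depend only on data known after the peeling step and are added in constant time per node.

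The heart of the proof, and the main obstacle, is to show that the two remaining correction terms of Theorem~\ref{thm:rec}, which involve ratios $\sigma^{(1)}_{yy'}(u)/\sigma^{(1)}_{yy'}$, are precisely captured by the $\zeta$-values computed by \_{\sc Accumulate}. Using the definition
\[
\zeta_s(u)=\sum_{t\in V_{\geq 2},\, s\neq t\neq u}\deg_1(t)\,\frac{\sigma_{st}^{(1)}(u)}{\sigma_{st}^{(1)}},
\]
the identification $\deg_1(t)=\degree{t}{0}$ (zero for $t\notin Y^{(0)}$), and the symmetry $\sigma^{(1)}_{yy'}(u)=\sigma^{(1)}_{y'y}(u)$, I would verify the two bookkeeping identities
\[
\sum_{\substack{y\in V_{\geq 2}\\ y\neq u}}\ \sum_{\substack{y'\in Y^{(0)}\\ y'\neq y,\,y'\neq u}} \degree{y'}{0}\,\frac{\sigma_{yy'}^{(1)}(u)}{\sigma_{yy'}^{(1)}} = \sum_{\substack{s\in V_{\geq 2}\\ s\neq u}}\zeta_s(u),
\]
\[
\sum_{\substack{y\in Y^{(0)}\\ y\neq u}}\ \sum_{\substack{y'\in Y^{(0)}\\ y'\neq y,\,y'\neq u}} \degree{y}{0}\degree{y'}{0}\,\frac{\sigma_{yy'}^{(1)}(u)}{\sigma_{yy'}^{(1)}} = \sum_{\substack{s\in V_{\geq 2}\\ s\neq u}} \degree{s}{0}\,\zeta_s(u),
\]
by relabelling indices. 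Since Theorem~\ref{thm:reczeta} guarantees that \_{\sc Accumulate} produces the $\zeta$-values correctly, the algorithm can form the per-source contribution $\bigl(2+\degree{s}{0}\bigr)\zeta_s(u)$ during the same backward pass and add it into the counter of $u$, accounting simultaneously for the fourth and fifth terms of the recursion.

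Assembling these pieces, for every $u\in V_{\geq 2}$ the counter of $u$ after the sweep over sources $s\in V_{\geq 2}$ equals the right-hand side of Theorem~\ref{thm:rec} at $i=0$, and dividing by $(n-1)(n-2)$ yields $\bc[0]{u}=\bc{u}$. For $u\in V_1$ there is nothing to do: any $s$--$t$ shortest path in $G$ that visits $u$ would have to enter and leave through $u$'s unique neighbor, so $\bc{u}=0$, which is consistent with the algorithm's convention of not writing to the columns of $\delta^{(0)}$ indexed by $V_1$. The routine but error-prone part, where I expect to spend most of the care, is keeping track of the index sets ($s\neq t$, $s,t\neq u$, and the split of $V_1$ into $N_1^{(0)}(u)$ versus $V_1\setminus N_1^{(0)}(u)$) so that every ordered source/target pair contributes to exactly one of the five terms of Theorem~\ref{thm:rec}, with no double-counting or omission.
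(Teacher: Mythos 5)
Your route is genuinely different from the paper's: you aggregate over all sources and match the column total against the five-term recursion of Theorem~\ref{thm:rec} at $i=0$, whereas the paper proves a per-source identity for $\delta_s^{(0)}(u)$ (splitting destinations into $V_{\geq 2}$, $V_1\setminus N(u)$ and $V_1\cap N(u)$, then casing on whether $s\in V_{\geq 2}$, $s\in V_1\setminus N(u)$, or $s\in V_1\cap N(u)$) that directly mirrors the row-by-row updates in lines 12--23 and the $\eta$ correction. Your two bookkeeping identities are correct (and you implicitly fix the $\degree{y}{0}$ versus $\degree{y'}{0}$ coefficient in the fourth term of Theorem~\ref{thm:rec}), and the polynomial pieces do sum to $\degree{u}{0}\,(2n-3-\degree{u}{0})$, which is exactly what the assignments $\delta[s,u]\gets n-2$ for $s\in V_1\cap N(u)$ plus the $\eta$ matrix contribute.

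The gap is that the accumulation you ascribe to the algorithm is not the one in the pseudocode. You claim the counter of $u$ receives $\sum_s\delta_s^{(1)}(u)$ once, plus $\sum_s\bigl(2+\degree{s}{0}\bigr)\zeta_s(u)$. Algorithm~\ref{alg:oneround} instead writes $\delta_s^{(1)}(u)+\zeta_s(u)$ into row $s$ for $s\in V_{\geq2}$ and copies $\delta_y^{(1)}(u)+\zeta_y(u)$ (with $y=N(s)$) into row $s$ for $s\in V_1\setminus N(u)$, so the column sum it actually forms is
\[
\sum_{\substack{s\in V_{\geq2}\\ s\neq u}}\bigl(1+\degree{s}{0}\bigr)\bigl(\delta_s^{(1)}(u)+\zeta_s(u)\bigr)
\]
plus the constant terms. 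Your total and the algorithm's total differ by $\sum_s\degree{s}{0}\,\delta_s^{(1)}(u)-\sum_s\zeta_s(u)$, which does vanish --- but only because of the symmetry $\sigma_{ab}^{(1)}(u)/\sigma_{ab}^{(1)}=\sigma_{ba}^{(1)}(u)/\sigma_{ba}^{(1)}$, which after exchanging the order of summation yields $\sum_{s}\degree{s}{0}\,\delta_s^{(1)}(u)=\sum_{s}\zeta_s(u)$. As written, your argument certifies a different (equivalent-in-total) accumulation scheme rather than the pseudocode; to close the gap you must either state and prove this exchange identity, or follow the paper's per-source decomposition, which has the added benefit of certifying every entry $\delta[s,u]$ of the ``full influence'' matrix and not merely the column sums.
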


\begin{proof} 

We break up the sum of $\delta_s^{(0)}(u)$ into two terms, depending on whether the destination node $t$ is in $V_{\geq 2}$, $V_1 \backslash N(u)$ and $V_1 \cap N(u)$.  The correctness proof considers two cases, whether $s \in V_{\geq 2}$ or $s \in V_1$ in order to analyze the sum of the latter first two terms

\begin{align}
\delta_s^{(0)}(u) &= \sum_{t \neq {s,u}} \frac{\sigma_{st}^{(0)}(u)}{\sigma_{st}^{(0)}} = \sum_{\substack{t \neq {s,u} \\ t \in V_{\geq 2}}} \frac{\sigma_{st}^{(0)}(u)}{\sigma_{st}^{(0)}}
+  \sum_{\substack{t \neq {s,u} \\ t \in V_1 \backslash N(u)}} \frac{\sigma_{st}^{(0)}(u)}{\sigma_{st}^{(0)}} +\sum_{\substack{t \neq {s,u} \\ t \in V_1 \cap N(u)}} \frac{\sigma_{st}^{(0)}(u)}{\sigma_{st}^{(0)}} \label{eq:deltasum}
\end{align}

The correctness proof examines two scenarios: $s \in V_{\geq 2}$ and $s \in V_1$, to analyze the sum of the first two terms. The third term in equation~\eqref{eq:deltasum} equals $deg_1(u)$ in both cases, as accounted for in lines 25-28. Note that in an implementation, matrix $\eta$ does not need to be materialized; it is included only for clarity.

\underline{Case I: $s \in V_{\geq 2}$.} By Theorem~\ref{thm:sigma} and Corollary~\ref{cor:sigmastu}, since   $s, t \in V_{\geq 2}$, $\sigma_{st}^{(0)}(u)=\sigma_{st}^{(1)}(u), \sigma_{st}^{(0)}=\sigma_{st}^{(1)}$. Therefore the first term of the sum can be rewritten as 

$$ \sum_{\substack{t \neq {s,u} \\ t \in V_{\geq 2}}} \frac{\sigma_{st}^{(0)}(u)}{\sigma_{st}^{(0)}} = 
\sum_{\substack{t \neq {s,u} \\ t \in V_{\geq 2}}} \frac{\sigma_{st}^{(1)}(u)}{\sigma_{st}^{(1)}} = \delta_s^{(1)}(u).
$$

By invoking Theorem~\ref{thm:sigma} and Corollary~\ref{cor:sigmastu}, we group  the second summation term   according to the unique neighbor $y \in V_{\geq 2} \backslash u$ of each $t \in V_1$ to obtain the following expression: 

$$\sum_{\substack{t \neq {s,u} \\ t \in V_1\backslash N(u)}} \frac{\sigma_{st}^{(0)}(u)}{\sigma_{st}^{(0)}} = \sum_{y \in V_{\geq 2} \backslash u} deg_1(y) \cdot \frac{\sigma_{sy}^{(1)}(u)}{\sigma_{sy}^{(1)}}= \zeta_s(u).$$

\underline{Case II(A): $s \in V_1, s \notin N(u)$.}  Let $y$ be the neighbor of $s$ in $V_{\geq 2}.$ Similarly as in Case I, by using again Theorem~\ref{thm:sigma} and Corollary~\ref{cor:sigmastu}, we obtain that the first term in Equation~\eqref{eq:deltasum} is equal to $\delta_y^{(1)}(u)$ and the second term $\zeta_y(u)$.

\underline{Case II(B): $s \in V_1, s \in N(u)$.} In this case, all paths from $s$ to the rest of the nodes in $G$ go through $u$, and therefore $\delta_s^{(0)}(u)=n-2$. This is implemented in lines 14-15 in Algorithm~\ref{alg:oneround}.

Combining these cases establishes the correctness of the algorithm.

\end{proof}

 \begin{algorithm}
\caption{\label{alg:oneround} Calculate betweenness centrality and full influence information using one round peeling}
\begin{algorithmic}[1]
   \Function{BC\_One\_Round\_Peeling\_Full\_Info}{$G, k=None$}

    \State $V_1 \leftarrow \{ v \in V : \deg(v) = 1 \}$ 
    \Comment Set of degree 1 nodes 
    \State $V_{\geq 2} = V \backslash V_1$
    \Comment Nodes of degree 2 or higher
    
    \State $\mathrm{bc}[u] \gets 0.0$ $\forall u \in V$
    \Comment Initializations
    \State   $\delta \gets 0^{n \times n}$,  $\zeta \gets 0^{n \times n}$,
     $\eta  \gets 0^{n \times n}$   
     \State $deg_1[u] \gets |N(u) \cap V_1|$
     \Comment Number of degree one neighbors
    \State $Y \gets \{u : deg_1[u] \geq 1\}$
    \Comment Nodes with at least 1 degree neighbor, wlog $Y \subseteq V_{\geq 2}$
    
    \State $\tilde{G} \gets G \backslash V_1$
    \State flag $\gets$ \textbf{False} 
    \Comment Remove degree one nodes 
     \If{$k = $ \textbf{None} or $k \geq \tilde{n}$}
        \State $nodes \gets V_{\geq 2}$
    \Else
        \State flag $\gets$ \textbf{True}  
        \State $nodes \gets \text{random sample}( V_{\geq 2}, k)$
    \EndIf
    \For{$s \in nodes$}
        \State $S, P, \sigma, D \gets \Call{SSSP}{\tilde{G}, s}$
       \Comment{Solve BC on $\tilde{G}$}
        \State $\Call{\_Accumulate}{S, P, \sigma, s, \delta, \zeta, deg_1}$
    \EndFor
     \If{$\text{flag}$}
    \Comment Rescaling
    \For{$u \in V_{\geq 2}$}
        \State $\delta  \gets  \frac{\tilde{n}}{k} \times \delta$,  $\zeta  \gets  \frac{\tilde{n}}{k} \times \zeta$
    \EndFor
    \EndIf
    \For{$s \in V$} 
    \Comment{Update $n \times (n-|V_1|)$ $\delta$ values. Observe that if $u \in V_1$, then $\delta[s, u] \gets 0.0$.}
        \For{$u \in V_{\geq 2}$}
            \If{$s \in V_1  \cap N(u)$}
                \State $\delta[s, u] \gets (n - 2)$ 
            \ElsIf{$s \in V_1 \text{ and }   s \notin N(u)$}
                \State $\{ y\} \gets N(s)$
                \State $\delta[s, u] \gets \delta[y, u] + \zeta[y, u]$
            \ElsIf{$s \in V_{\geq 2} $}
                \State $\delta[s, u] \gets \delta[s, u] + \zeta[s, u]$
            \EndIf
        \EndFor
    \EndFor
    \For{$u \in Y$}
        \For{$s \in V \backslash ( \{u\} \cup (N(u) \cap V_1))$}
            \State $\eta[s, u] \gets deg_1[u]$
        \EndFor
    \EndFor
      
    \State $\delta \gets \delta + \eta   $

    \For{$u \in V_{\geq 2}$}      
            \State $\mathrm{bc}[u]=\frac{1}{(n-1)(n-2)} \sum_{s \in V} \delta[s,u]$ 
    \EndFor  
    \State {\bf return} $\mathrm{bc}$
\EndFunction
\end{algorithmic}
\end{algorithm}

\begin{algorithm}
\caption{}
\begin{algorithmic}[1]
\Function{$\_$Accumulate}{$S, P, \sigma, s, \delta, \zeta, \text{deg}_1$}
    \State 
    $\delta_{\text{tmp}}[u], \zeta_{\text{tmp}}[u]  \gets 0.0$ for all $u \in V$
    \While{$S \neq \emptyset$}
        \State $w \gets S.\text{pop}()$
        \State $\text{coeff}_{\sigma} \gets \frac{1 + \delta_{\text{tmp}}[w]}{\sigma[w]}$
        \State $\text{coeff}_{\zeta} \gets \frac{\text{deg}_1[w] + \zeta_{\text{tmp}}[w]}{\sigma[w]}$
        \For{$v \in P[w]$}
            \If{$v \neq s$}
                 \State $\delta_{\text{tmp}}[v] \gets \delta_{\text{tmp}}[v] +  \text{coeff}_{\sigma} 
 \times \sigma[v] $
                \State $\zeta_{\text{tmp}}[v] \gets \zeta_{\text{tmp}}[v] + \text{coeff}_\zeta \times \sigma[v]  $
            \EndIf
        \EndFor
        \State $\delta[s][w] \gets \delta_{\text{tmp}}[w]$
        \State $\zeta[s][w] \gets \zeta_{\text{tmp}}[w]$
    \EndWhile
\EndFunction
\end{algorithmic}
\end{algorithm}

\paragraph{{\bf Time complexity.}} We analyze the time complexity of an efficient implementation of Algorithm~\ref{alg:oneround}, showing also that it can provably yield asymptotic speedups over Brandes' $O(nm)$ implementation. Lines 8-10 require $O(\tilde n \tilde{m})$ time, using Brandes' algorithm on the pruned graph $\tilde{G}$. Lines 12-23 require  $O(n \cdot \tilde{n})$ time, as there are at most $O(n \cdot \tilde{n})$ entries in the $\delta$ matrix that need to be updated and each update (i.e., lines 15, 18, 20) requires constant time.  Finally lines 24-28 require $O(|Y|n)$ times where $Y$ is the set of nodes in $G$ that have at least one degree-1 neighbor. Combining these terms we obtain that the time complexity of Algorithm~\ref{alg:oneround} in the standard RAM model is $O(\tilde n \tilde{m} + n \cdot \tilde{n} + |Y| n)$. While this expression appears still nuanced, it is clean and we can justify  asymptotic speedups. For example, in the case of a star graph where $Y=\{\text{center node}\}, \tilde{n}=1, \tilde{m}=0$ we obtain runtime complexity $O(n)$ instead of $O(n^2)$ using Brandes' algorithm on the star graph. In general for trees, peeling will always result in an empty core and peeling can justify $O(n)$ speedups, a fact pointed out in~\cite{sariyuce2013shattering} as well.

A more interesting example is a simplified variant of the core-periphery model that is a mathematical model with a long history~\cite{nemeth1985international,avin2018elites,rombach2017core,zhang2015identification}. The core-periphery model is often used to describe social networks, but it is used to describe other types of networks. In the core-periphery model, the network is divided into two parts: the core and the periphery.  The core is made up of the most important nodes in the network, and the periphery is made up of the least important nodes. The core and the periphery are connected by a set of links.   There exists numerous theoretical works that under reasonable assumptions that mimick real-world properties of networks show that the core has sublinear size~\cite{papachristou2021sublinear}.  Let's consider a core-periphery graph where the core consists of a clique with $O(\sqrt{n})$ nodes, each connected to $O(\sqrt n)$ degree-1 nodes which form the periphery. The total number of edges is $O(n)$.  The overall complexity is $O(n^{3/2})$ which is significantly less than Brandes' run time $O(nm)=O(n^2)$.   

\subsection{Memory efficient implementation}

In many practical scenarios, the complete information stored in the matrix $\delta$ is unnecessary. Moreover, for extensive networks, storing the $\delta$ matrix can be infeasible. Therefore, we can simplify our algorithm to compute only the betweenness centralities of the nodes, as demonstrated in Algorithm~\ref{alg:memefficient}. Our pseudocode can be incorporated with minor modifications into a standard betweenness centrality implementation, such as the one provided in NetworkX~\cite{hagberg2020networkx}. The proof of correctness remains the same as for Algorithm~\ref{alg:oneround}. As a small remark for perhaps the less obvious line 26,  the term is derived from combining the terms $deg_1(u) \times (n-2) + deg_1(u)  \times (n-(deg_1(u) +1))=d \times (2n-deg_1(u) -3)$.

\begin{algorithm}
\caption{\label{alg:memefficient} Calculate Betweenness Centrality  using One Round Peeling (Memory Efficient)}
\begin{algorithmic}[1]
\Function{BC\_One\_Round\_Peeling\_Mem\_Efficient}{$G, k=$\textbf{None}} \\
\Comment{$k$ is the sample size. When $k$={\bf None}, we perform the exact computation}
    \State $V_1 \leftarrow \{ v \in V : \deg(v) = 1 \}$      
    \State $V_{\geq 2} = V \backslash V_1$, $\tilde{n} \gets |V_{\geq 2}|$ 
    \Comment Nodes of degree 2 or higher

    \State $\mathrm{bc}[u] \gets 0.0$ $\forall u \in V$
    \Comment Initializations
     \State $deg_1[u] \gets |N(u) \cap V_1|$ $\forall u \in V$
     \Comment Number of degree one neighbors
    \State $Y \gets \{u : deg_1[u] \geq 1\}$
    \Comment Nodes with at least 1 degree neighbor, wlog $Y \subseteq V_{\geq 2}$
    \State $\tilde{G} \gets G \backslash V_1$
    \State flag $\gets$ \textbf{False} 
    \If{$k = $ \textbf{None} or $k \geq \tilde{n}$}
        \State $nodes \gets V_{\geq 2}$
    \Else
        \State flag $\gets$ \textbf{True}  
        \State $nodes \gets \text{random sample}( V_{\geq 2}, k)$
    \EndIf
    \For{$s \in nodes$}
        \State $S, P, \sigma, D \gets \Call{SSSP}{\tilde{G}, s}$
        \State $\mathrm{bc} \gets \Call{\_Accumulate\_Basic\_Mem\_Efficient}{\mathrm{bc}, S, P, \sigma, s, \text{deg}_1}$
    \EndFor
    \If{$\text{flag}$}
    \Comment Rescaling
    \For{$u \in V_{\geq 2}$}
        \State $\mathrm{bc}[u] \gets  \frac{\tilde{n}}{k} \times \mathrm{bc}[u]$ for all $u \in V_{\geq 2}$
    \EndFor
    \EndIf
    \For{$u \in V_{\geq 2}$}
        \State $\mathrm{bc}[u] \gets \mathrm{bc}[u] + (2n - 3 - \text{deg}_1[u]) \times \text{deg}_1[u]$
        \State $\mathrm{bc}[u] \gets \mathrm{bc}[u] / ((n - 1) \times (n - 2))$
    \EndFor
    \State \Return $\mathrm{bc}$
\EndFunction
\end{algorithmic}
\end{algorithm}

\begin{algorithm}
\caption{Memory Efficient Accumulation}
\begin{algorithmic}[1]
\Function{$\_$Accumulate\_Basic\_Mem\_Efficient}{$\mathrm{bc}, S, P, \sigma, s, \text{deg}_1$}
    \State $\delta_{\text{tmp}}[u], \zeta_{\text{tmp}}[u] \gets 0.0$ for all $u \in S$
    \While{$S \neq \emptyset$}
        \State $w \gets S.\text{pop}()$
        \State $\text{coeff} \gets \frac{1 + \delta_{\text{tmp}}[w]}{\sigma[w]}$
        \State $\text{coeff}_{\zeta} \gets \frac{\text{deg}_1[w] + \zeta_{\text{tmp}}[w]}{\sigma[w]}$
        \For{$v \in P[w]$}
            \If{$v \neq s$}
                \State $\delta_{\text{tmp}}[v] \gets \delta_{\text{tmp}}[v] + \sigma[v] \times \text{coeff}$
                \State $\zeta_{\text{tmp}}[v] \gets \zeta_{\text{tmp}}[v] + \sigma[v] \times \text{coeff}_{\zeta}$
            \EndIf
        \EndFor
        \If{$w \neq s$}
            \State $\mathrm{bc}[w] \gets \mathrm{bc}[w] + (1 + \text{deg}_1[s]) \times (\delta_{\text{tmp}}[w] + \zeta_{\text{tmp}}[w])$
        \EndIf
    \EndWhile
    \State \Return $\mathrm{bc}$
\EndFunction
\end{algorithmic}
\end{algorithm}







\section{Sampling after Peeling}
\label{sec:sampling}

Removing degree-1 nodes can improve both the run time and the accuracy of sampling BC estimators. Consider our proposed Algorithm~\ref{alg:oneround}, which performs a single round of peeling and samples $k$ nodes in the graph $\tilde{G}$ for SSSP computations, similar to methods in~\cite{wang2006fast,brandes2007centrality}. The pseudocode changes are annotated as comments. Specifically, in line 8, we iterate over all nodes in the sample of $k$ nodes. After the loop, the $\delta$ and $\zeta$ matrices must be rescaled by a factor of $\frac{\tilde{n}}{k}$.
Let's focus on the sum of $\delta$ and $\zeta$ terms in lines 18 and 20, excluding terms that are known or easy to compute, such as $n-2$ or $deg_1(u)$. Sampling introduces error in the $\delta$ and $\zeta$ terms computed on $\tilde{G}$. The unnormalized BC score $(n-1)(n-2) \mathrm{bc}(u)$ will involve a sum of the form 

\begin{align}
 \sum_{\substack{s \in V \\ s\neq u}} \delta_s^{(0)}(u) &= \sum_{\substack{s \in V_1 }} \delta_s^{(0)}(u) + \sum_{\substack{s \in V_{\geq 2} \\ s\neq u}}  \delta_s^{(0)}(u)     = 
 \sum_{\substack{s \in V_1 \\ y \in N(s) }} (\delta_y^{(1)}(u)+\zeta_y(u)) + \sum_{\substack{s \in V_{\geq 2} \\ s\neq u}}  (\delta_s^{(1)}(u)+\zeta_s(u)) \notag \\ 
   &= \sum_{\substack{s \in Y\\ s \neq u}} deg_1(s) (\delta_s^{(1)}(u)+\zeta_s(u)) + \sum_{\substack{s \in V_{\geq 2} \\ s\neq u}}  (\delta_s^{(1)}(u)+\zeta_s(u)) \label{eq:eq1} \\ 
 &=  \sum_{\substack{s \in V_{\geq 2} \\ s\neq u}} (1+deg_1(s)) (\delta_s^{(1)}(u)+\zeta_s(u))  \label{eq:eq2} 
\end{align}

Notice that we have expressed the original sum over $n-1$ source nodes $s$ as a weighted sum over $\tilde{n}-1$ source nodes in $V_{\geq 2}$. Before we delve into the analysis,  consider a graph where $V_1$ is a large fraction of the nodes, a realistic scenario according to Figure~\ref{fig:combined}. Sampling a small number  $k$ of pivots from $G$ will likely yield a sample that contains several degree-1 nodes. Such nodes yield overestimates of BC especially for their neighbors. Performing the sampling on the 2-core or even after 1-round of peeling which removes in practice the vast majority of degree-1 nodes, allows typically for a better estimation for the same sample size.  We provide a synthetic setting in Section~\ref{subsec:synth} that showcases this fact.

\noindent We formalize few scenarios that shows how peeling may help sampling. 

\begin{thm}
\label{thm:sample}
Let $G$ be a graph with $|Y|=O(\frac{\log \tilde{n}}{\epsilon^2})$. Then, sampling $\lceil \frac{\log \tilde{n}}{\epsilon^2} \rceil$ uniformly at random suffices to obtain an additive $\epsilon \frac{\tilde{n}-1}{n-1}$ approximation for all betweenness centralities with high probability. 
\end{thm}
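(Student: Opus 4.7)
The plan is to split $(n-1)(n-2)\,\bc{u}$ into a \emph{heavy} part supported on the small set $Y$, which is computed exactly, plus a \emph{light} part supported on the remaining sources, which is estimated via uniform pivots. Repeating the case analysis that drives the proof of Theorem~\ref{thm:correctness} (equivalently, keeping the three terms of Equation~\eqref{eq:deltasum} separate in the passage from Equation~\eqref{eq:eq1} to Equation~\eqref{eq:eq2}) gives
\[
(n-1)(n-2)\,\bc{u} \;=\; T_1(u) + T_2(u) + deg_1(u)\bigl(2n - 3 - deg_1(u)\bigr),
\]
with $T_1(u)=\sum_{y\in Y,\, y\neq u} deg_1(y)\bigl(\delta^{(1)}_y(u)+\zeta_y(u)\bigr)$ and $T_2(u)=\sum_{s\in V_{\geq 2},\, s\neq u}\bigl(\delta^{(1)}_s(u)+\zeta_s(u)\bigr)$. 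The third summand is deterministic and obtained from the degree sequence. Under the hypothesis $|Y|=O(\log\tilde n/\epsilon^2)$, I would compute $T_1(u)$ exactly for every $u$ by running one SSSP plus accumulation from each $y\in Y$, which fits inside the claimed $O(\log\tilde n/\epsilon^2)$ pivot budget.

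For $T_2$ I would sample $s_1,\ldots,s_k$ i.i.d.\ uniformly from $V_{\geq 2}$ with $k=\lceil \log\tilde n/\epsilon^2\rceil$, set $Z_i := \delta^{(1)}_{s_i}(u)+\zeta_{s_i}(u)$, and use
\[
\hat T_2(u) \;=\; \frac{\tilde n}{k}\sum_{i=1}^k Z_i.
\]
Since $\delta^{(1)}_u(u)=\zeta_u(u)=0$, the estimator is unbiased, $\mathbb{E}[\hat T_2(u)]=T_2(u)$. The essential range bound comes from Case~I of Theorem~\ref{thm:correctness}: for every $s\in V_{\geq 2}$ one has $\delta^{(0)}_s(u)=\delta^{(1)}_s(u)+\zeta_s(u)+deg_1(u)$, so $0\leq Z_i\leq \delta^{(0)}_{s_i}(u)\leq n-2$.

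Plugging $M=n-2$ and $\xi = \epsilon(\tilde n-1)(n-2)/\tilde n$ into Proposition~\ref{prop:hoeffding},
\[
\Pr\!\left[\,|\hat T_2(u)-T_2(u)|\geq \epsilon(\tilde n-1)(n-2)\,\right] \;\leq\; 2\exp\!\left(-2k\,\epsilon^2\,\tfrac{(\tilde n-1)^2}{\tilde n^2}\right),
\]
which is $O(\tilde n^{-c})$ for a constant $c>1$. A union bound over the $\tilde n$ candidates $u\in V_{\geq 2}$ (nodes in $V_1$ have $\bc{u}=0$ and are recovered exactly) yields $|\hat T_2(u)-T_2(u)|\leq \epsilon(\tilde n-1)(n-2)$ simultaneously for all $u$, with high probability. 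Dividing through by $(n-1)(n-2)$ converts this into the advertised additive error $\epsilon(\tilde n-1)/(n-1)$ on $\bc{u}$.

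The main obstacle is precisely the range bound used in the Hoeffding step. A naive uniform-pivot estimator for $\sum_s \delta^{(0)}_s(u)$, or one that kept the weight $(1+deg_1(s))$ inside the sample as in Equation~\eqref{eq:eq2}, would have summands as large as $(1+\max_s deg_1(s))(n-2)$, which can be $\Theta(n^2)$ and prevents Hoeffding from closing at sample size $O(\log\tilde n/\epsilon^2)$. The hypothesis $|Y|=O(\log\tilde n/\epsilon^2)$ is exactly what lets us peel those heavy-weight sources out of the sample and evaluate their contribution deterministically, leaving behind summands with the clean range $[0,n-2]$ that the Hoeffding bound requires.
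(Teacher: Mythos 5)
Your proposal is correct and follows essentially the same route as the paper: compute the $Y$-weighted term of Equation~\eqref{eq:eq1} exactly with $|Y|$ SSSPs, estimate the remaining unweighted sum over $V_{\geq 2}$ with uniform pivots, bound each summand by $n-2$ via $\delta_s^{(1)}(u)+\zeta_s(u)\leq n-2$, and close with Hoeffding plus a union bound over $\tilde n$ nodes. Your explicit bookkeeping of the deterministic $deg_1(u)(2n-3-deg_1(u))$ term and your closing remark on why the $|Y|$ hypothesis is needed to avoid the $(1+\Delta_1)(n-2)$ range are both consistent with (and slightly more careful than) the paper's exposition.
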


\noindent In this scenario,  the input graph's $G$  set of nodes that have at least one degree-1 neighbor, i.e., the set $Y$ in line 6 of Algorithm~\ref{alg:oneround}, is small. Notice that compared to Brandes-Pich~\cite{brandes2007centrality}, we need $\Omega(\frac{\log \tilde{n}}{\epsilon^2})$ rather than $\Omega(\frac{\log n}{\epsilon^2})$ pivots to get the same guarantee.  Furthermore, instead of an $\epsilon$ additive approximation, we obtain an error equal to $\epsilon \frac{\tilde{n}}{n} \leq \epsilon$.  The same logic extends to multiple rounds of peeling.  Without any loss of generality, we assume that $\tilde{n}$ grows as a function of $n$, even arbitrarily slow (otherwise we can just compute the sum exactly in constant number of SSSP computations). In cases where $\tilde{n} \ll n$, the additive error tends to 0 as $n$ grows to infinity.

\begin{proof}
Consider equation \eqref{eq:eq1}. When $|Y|=O(\frac{\log \tilde{n}}{\epsilon^2})$ then we compute the first term 

$$\sum_{\substack{s \in Y\\ s \neq u}} deg_1(s) (\delta_s^{(1)}(u)+\zeta_s(u)),$$  

\noindent exactly by running $|Y|$ SSSP computations, using the vertices in $Y$ as the sources  We estimate the second term by sampling $k=\frac{\log \tilde{n}}{\epsilon^2}$ pivots uniformly at random. 

Notice that $\delta_s^{(1)}(u) \leq \tilde{n}-2$ and   $\zeta_s(u) \leq |V_1| = n-\tilde{n}$, so the sum of two such terms is upper bounded by $n-2$. We use Hoeffding's inequality by setting $M=n-2\approx n, \xi = \epsilon (n-2) \approx \epsilon n$,

$$ \Prob{ |\frac{\sum_{\text{pivot~} p} (\delta_p^{(1)}+\zeta_p)(u)  }{k}- \frac{\sum_{\substack{s \in V_{\geq 2} \\ s\neq u}}  (\delta_s^{(1)}+\zeta_s)(u) }{\tilde{n}-1} | \geq \epsilon(n-2) } \leq e^{-2 \log{\tilde{n}}} = \frac{1}{\tilde{n}^2}.$$

By taking a union bound over all nodes $u \in V_{\geq 2}$, we obtain that with high probability $1-\frac{1}{\tilde{n}}$ the above inequality holds for all nodes. Let's dissect what this inequality  suggests about the betweenness centrality of node $u$ in the input graph.  Our output estimator is $ \mathrm{est} = \frac{\tilde{n}-1}{k}  \cdot \sum_{\text{pivot~} p} (\delta_p^{(1)}+\zeta_p)(u)$ which is an unbiased estimator of the quantity $(n-1)(n-2)bc(u)$. This suggests that  whp

\begin{align*}
    (n-1)(n-2)bc(u)-\mathrm{est} &\in [ - \epsilon (\tilde{n}-1)(n-2), \mathrm{est}+ \epsilon (\tilde{n}-1)(n-2) ] \rightarrow \\
       bc(u)-\frac{\mathrm{est}}{(n-1)(n-2)} &\in [ - \epsilon \frac{\tilde{n}-1}{n-1}, \epsilon \frac{\tilde{n}-1}{n-1} ] 
\end{align*}

Assuming the conditions of our theorem are satisfied (note that determining the size of $|Y|$ can be done in linear time), the total runtime is $O(\frac{\log{\tilde{n}}}{\epsilon^2} (\tilde{n}+\tilde{m})+n \tilde{n}+ n \frac{\log{\tilde{n}}}{\epsilon^2})$.
\end{proof}

\noindent Notice that the runtime of our method can be significantly lower than the straightforward Brandes-Pich time complexity of \( O\left((n+m)\log{n}/\epsilon^2\right) \). Theorem~\ref{thm:sample} highlights multiple key benefits of peeling: it can reduce sample complexity, thereby decreasing overall runtime, and simultaneously enhance accuracy.

Let's consider now an alternate scenario, where $|Y|$ is large in size and thus Theorem~\ref{thm:sample} does not apply. As a special case, consider the extreme case where $Y=V_{\geq 2}$ and the degree-1 nodes are evenly spread among the $|Y|$ nodes, making the assumption that $deg_1(u) \approx \frac{|V_1|}{|Y|} = c$ $\forall u \in V_{\geq 2}$.  This simplified assumption allows us to apply Hoeffding's inequality in a meaningful way; we will explain later why  Hoeffding's inequality  does not yield good bounds when the degree-1 sequence is skewed and how to resolve this issue using Bernstein's inequality. Since the proof of the following statement follows closely the proof of Theorem~\ref{thm:sample}, we state it as a corollary.

\begin{cor}
\label{cor:allsame}
 Suppose all nodes in $Y$ have the same number of degree-1 neighbors. Then, sampling $\lceil \frac{\log \tilde{n}}{\epsilon^2} \rceil$ uniformly at random suffices to obtain an additive $\epsilon$ approximation for all betweenness centralities with high probability. 
\end{cor}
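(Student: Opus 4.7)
The plan is to mirror the proof of Theorem~\ref{thm:sample}, but this time exploit Equation~\eqref{eq:eq2} rather than Equation~\eqref{eq:eq1}. The key algebraic observation is that when $\mathrm{deg}_1(s) \equiv c$ across $Y = V_{\geq 2}$, the factor $(1+\mathrm{deg}_1(s))$ in~\eqref{eq:eq2} is a single constant $(1+c)$ that can be pulled outside the sum. The same assumption gives $|V_1| = c\,\tilde{n}$, hence $n = (1+c)\tilde{n}$, a coupling that will be essential later when balancing the Hoeffding bound.

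I would then proceed as follows: fix $u \in V_{\geq 2}$, sample $k = \lceil \log\tilde{n}/\epsilon^2 \rceil$ pivots $p_1,\ldots,p_k$ uniformly at random from $V_{\geq 2}\setminus\{u\}$, and form the per-sample random variable
\[
X_i \;=\; (1+c)\bigl(\delta_{p_i}^{(1)}(u) + \zeta_{p_i}(u)\bigr).
\]
The bound $\delta_s^{(1)}(u) + \zeta_s(u) \leq n-2$ from the proof of Theorem~\ref{thm:sample} still applies, so each $X_i \in [0, M]$ with $M = (1+c)(n-2)$, and $(\tilde{n}-1)\bar{X}$ is unbiased for $(n-1)(n-2)\bc{u}$ by~\eqref{eq:eq2}. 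A direct invocation of Hoeffding's inequality (Proposition~\ref{prop:hoeffding}) with deviation $\xi = \epsilon M$ gives failure probability at most $e^{-2k\epsilon^2} \leq \tilde{n}^{-2}$ per node, and a union bound over the $\tilde{n}$ candidates in $V_{\geq 2}$ handles all scores simultaneously (note that a degree-$1$ vertex has zero BC in this reduction and thus requires no estimation).

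Translating the error on $\bar{X}$ back to an error on $\bc{u}$ introduces the prefactor $(\tilde{n}-1) M / ((n-1)(n-2))$, which simplifies to $(1+c)(\tilde{n}-1)/(n-1) \leq 1$ via the identity $n = (1+c)\tilde{n}$, producing the claimed additive $\epsilon$ approximation. The step requiring the most care is precisely this last simplification: in Theorem~\ref{thm:sample} the analogous factor equaled $(\tilde{n}-1)/(n-1) < 1$ and delivered a sharper bound, whereas here the extra $(1+c)$ exactly cancels that advantage, yielding only the plain $\epsilon$. I also expect that relaxing the uniform-degree assumption would break the Hoeffding step, because $M$ would then be driven by a few high-$\mathrm{deg}_1$ sources rather than the bulk, making the worst-case range of $X_i$ much larger than its typical magnitude; the author's promised switch to Bernstein's inequality is the natural remedy, and this is the main obstacle to a direct generalization of the corollary.
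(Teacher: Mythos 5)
Your proposal is correct and follows essentially the same route as the paper: both pull the constant factor $(1+c)$ out of the sum in Equation~\eqref{eq:eq2}, apply Hoeffding exactly as in Theorem~\ref{thm:sample} to $\sum_s (\delta_s^{(1)}(u)+\zeta_s(u))$, and then use the identity $c+1 = n/\tilde{n}$ to show the resulting confidence radius $\epsilon\,(c+1)(\tilde{n}-1)/(n-1)$ collapses to (at most) $\epsilon$. Your closing remark about why the uniformity assumption is essential and why Bernstein is the remedy in the skewed case matches the paper's own discussion following the corollary.
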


\begin{proof}[Proof Sketch]
Our goal is to estimate 

\begin{align}
\label{eq:su}
S_u  \stackrel{\text{def.}}{=} \sum_{\substack{s \in V_{\geq 2} \\ s\neq u}} (1+deg_1(s)) (\delta_s^{(1)}(u)+\zeta_s (u)) = (c+1) \sum_{\substack{s \in V_{\geq 2} \\ s\neq u}}  (\delta_s^{(1)}(u)+\zeta_s (u)).    
\end{align}

We apply Hoeffding's inequality and the rest of arguments as above for the sum $\sum_{\substack{s \in V_{\geq 2} \\ s\neq u}}  (\delta_s^{(1)}(u)+\zeta_s (u))$. Due to the term $(c+1)$ we obtain the following confidence interval for the betweenness centrality $bc(u)$ for all nodes $u \in V_{\geq 2}$ with high probability:

\begin{align*}
       bc(u)-\frac{\mathrm{est}}{(n-1)(n-2)} &\in [ - \epsilon \frac{(\tilde{n}-1)(c+1)}{n-1}, \epsilon \frac{(\tilde{n}-1)(c+1)}{n-1} ]. 
\end{align*}

This is where our simplified assumption is helpful. Since $c=\frac{|V_1|}{\tilde{n}}=\frac{n-\tilde{n}}{\tilde{n}}=\frac{n}{\tilde{n}}-1 \rightarrow (c+1) = \frac{n}{\tilde{n}}.$ This suggests that the confidence interval is defined by $\pm \epsilon \frac{(\tilde{n}-1)}{n-1}\frac{n}{\tilde{n}}\approx \epsilon$.

\end{proof}

\noindent \noindent When the degree-1 sequence is not uniform, for example skewed, with most of the nodes in $Y$ having low degree-1 and few having large degree-1,  Hoeffding's inequality is suboptimal. Specifically, let    $\Delta_1 \stackrel{\text{def.}}{=} \max_{v \in V} deg_1(v)$  to be the largest degree-1 value in the input graph $G$. Each term in the sum we want to estimate  is upper-bounded by $(1+\Delta)(n-2)$.  Observe  that we need to set $\xi = \epsilon \frac{ (n-1)(n-2)}{\tilde{n}-1} \approx \frac{n^2}{\tilde{n}} $ in Hoeffding's inequality to obtain an $\epsilon n (n-2)$ additive error term for the sum we care about or equivalently an asymptotically $\approx \epsilon$ error for the betweenness centrality of node $u$. This yields that the number of pivots $k$ has to be $\Omega\big(\frac{\log \tilde{n}}{\epsilon^2} \cdot \frac{\tilde{{n}} (1+\Delta_1) }{n} \big)$. An obvious bad scenario emerges when  $\Delta_1 = O(n)$ and $\tilde{n}=O(n)$, which can happen if all degree-1 nodes that are $O(n)$ are connected to a single node. A more refined expression can be obtained by Bernstein's inequality~\ref{prop:bernstein}.

To simplify the application of Bernstein's inequality, we will think of sampling each node $i \in V_{\geq 2}$ as a pivot with probability $\frac{k}{\tilde{n}-1} \approx \frac{k}{\tilde{n}}$ in order to obtain a sample size $k$ {\it in expectation} from $V_{\geq 2}$. Let's consider a node \( u \) that belongs to \( V_{\geq 2} \). For the purposes of this discussion, we will exclude it from the notation; as usual, once we adjust the failure probability for a single node,  we take a union bound over $\tilde{n}$ nodes to prove that the property holds for all with high  probability. In the following we make some reasonable asymptotic approximations  
to enhance readability and simplify certain expressions.

 Define $val_i
 \stackrel{\text{def.}}{=} (1+deg_1(i)) (\delta_i^{(1)}+\zeta_i)(u)$ and $S_u$ as in Equation~\eqref{eq:su} 
 and the set of random variables

 \[
\hat{X_i} = 
\begin{cases} 
 val_i & \text{with probability } p \stackrel{\text{def.}}{=} \frac{k}{\tilde{n-1}} \\
0 & \text{with probability } 1-p.
\end{cases}
\]

\noindent To apply Bernstein's inequality, we need to center the random variables around their mean. Define   

$$ X_i \stackrel{\text{def.}}{=} \hat{X}_i - p \cdot val_i.$$

Notice that $\Mean{X_i}=0, \Var{X_i} =  \Mean{X_i^2} = p\cdot (1-p) \cdot val_i^2$. For a small value of $k$, we approximate this as $\Mean{X_i^2}  \approx p \cdot val_i^2 = \frac{k}{\tilde{n}} \cdot val_i^2$. Furthermore, as argued earlier, $|X_i| \leq M=(1+\Delta_1)(n-2) \approx (1+\Delta_1) n $.  The event  

$$|\sum_{i} X_i| \leq t \Leftrightarrow  |\sum_{i} \hat{X}_i - \frac{k}{\tilde{n}} S_u | \leq t \Leftrightarrow  |\frac{\tilde{n}} {k} \sum_{i} \hat{X}_i -S_u| \leq t \frac{\tilde{n}} {k}  = |\text{est}_u-S_u| \leq t\frac{\tilde{n}} {k}, $$

\noindent where $\text{est}_u$ is our output estimator. We would like the error to be $\epsilon n(n-2)\approx \epsilon n^2$, where $0<\epsilon<1$ which implies that we need to set $t=\epsilon k  \frac{n^2}{\tilde{n}}$ (notice that without any loss of generality $\tilde{n}\geq k >0$, otherwise no sampling is needed). By plugging in these values in Bernstein's inequality we obtain  

\begin{align}
\label{eq:ugly}
\Prob{|\sum_{i} X_i| \leq \epsilon k  \frac{n^2}{\tilde{n}} } &= \Prob{|\text{est}_u-S_u| \geq \epsilon n^2} \leq  \exp \left( - \frac{1}{2} \frac{ \tfrac{\epsilon^2 k n^4}{\tilde{n}} }{  \sum_{i=1}^{\tilde n} val_i^2 + \frac{1}{3} (1+\Delta_1)  \epsilon n^3} \right) \\.
\end{align}

For example if $\tilde{n} = \sqrt{n}, \Delta_1 = \sqrt{n}$, we recover the guarantee of corollary~\ref{cor:allsame} that $k =\Omega( \frac{\log \tilde{n}}{\epsilon^2})$ pivots suffice to obtain a failure probability $\frac{1}{\text{poly}(\tilde{n})}$.  Overall, we obtain that 
$k = \Omega\left( \frac{\log \tilde{n}}{\epsilon^2} \cdot \tilde{n} \cdot\frac{\sum_{i=1}^{\tilde n} val_i^2 + \frac{1}{3} (1+\Delta_1)  \epsilon n^3}{n^4}\right)$ pivots are needed, an expression that is unfortunately a function of $\sum_{i=1}^{\tilde n} val_i^2$ that we do not know. However, knowing the degree-1  we can obtain non-trivial insights. Consider, for example, a power-law distribution for both the degree sequence (characterized by a slope $\alpha$) and the degree-1 sequence (characterized by a slope $r$). Recall that by definition

$$ \# \text{nodes with degree~} k = c \frac{n}{k^\alpha}.$$ 

\noindent Assuming $\alpha>2$, we obtain that $|V_1|=\Theta(n)$ and thus $\tilde{n}=\Theta(n)$. Furthermore, assuming   $r>3$ to ensure  bounded variance,  we obtain 

$$\sum_{i \in V_{\geq 2}} val_i^2= \sum_{i \in V_{\geq 2}} (1+deg_1(i))^2 n^2 \leq n^2 \sum_{i \in V_{\geq 2}} (2 \cdot deg_1(i))^2 \leq C(r) n^3,$$

\noindent for some constant $C$ that is a function of the degree-1 power law slope $r$. We upper bound $\Delta_1$ with the trivial upper bound $n$ (note that one can use a more refined bound of the maximum degree in $G$ that with high probability is at most $O(n^{1/\alpha})$). By plugging those values in Inequality~\eqref{eq:ugly}, the probability upper bound becomes 
$ \frac{\epsilon^2 \cdot k}{C+\frac{\epsilon}{3}}$
which gives a cleaner expression that implies that $\Omega(\frac{\log \tilde{n}}{\epsilon^2})$ pivots suffice. We present this special case in the following corollary, as it results in a constant factor improvement rather than an asymptotic one. However, improvements in constant factors can have a significant impact on real-world benchmarks.

\begin{cor}
\label{thm:powerlaw}
Let $G$ be a graph with power law degree with slope $\alpha>2$ and power law degree-1 sequence with slope $r>3$. A uniform sample of $\Omega(\frac{\log \tilde{n}}{\epsilon^2})$ suffices to obtain an $\epsilon$ additive approximation to all betweenness centrality scores with high probability.
\end{cor}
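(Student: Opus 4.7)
The plan is to specialize the generic Bernstein-based bound in Inequality~\eqref{eq:ugly} to the power-law setting, verifying that the two asymptotic parameters ($|V_1|/n$ and $\sum_i val_i^2/n^3$) behave as the author sketches in the paragraph preceding the corollary, and then cleanly reading off the sample complexity.

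First, I would translate the power-law assumptions into two quantitative facts. Using the standard formula $\#\{\text{degree-}k \text{ nodes}\} = c\,n/k^\alpha$ with $\alpha>2$, the fraction of degree-1 nodes is a positive constant, so $|V_1| = \Theta(n)$ and hence $\tilde n = \Theta(n)$. Similarly, treating the degree-1 count $deg_1(\cdot)$ as obeying a power law with slope $r$, the second moment equals $\sum_{k\geq 1} k^2 \cdot n/k^r = n \sum_k k^{2-r}$; this series converges precisely when $r>3$, yielding $\sum_{i \in V_{\geq 2}} deg_1(i)^2 \leq C_0(r)\, n$ for a constant $C_0(r)$. From the crude pointwise bound $val_i = (1+deg_1(i))(\delta_i^{(1)}+\zeta_i)(u) \leq (1+deg_1(i))(n-2)$, squaring and summing gives $\sum_i val_i^2 \leq 4n^2 \sum_i deg_1(i)^2 \leq C(r)\, n^3$.

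Next, I would plug $\tilde n = \Theta(n)$, $\sum_i val_i^2 = O(n^3)$, and the trivial bound $\Delta_1 \leq n$ into the exponent of Inequality~\eqref{eq:ugly}. The denominator becomes $C(r)\,n^3 + \tfrac{1}{3}(1+n)\epsilon n^3 = O\bigl((C(r)+\epsilon)\,n^4\bigr)$ (using $0<\epsilon<1$), while the numerator is $\tfrac12\,\epsilon^2 k\, n^4/\tilde n = \Theta(\epsilon^2 k\, n^3)$ after absorbing the $\tilde n = \Theta(n)$ factor. The exponent therefore simplifies to $-\Omega\!\left(\epsilon^2 k\,/(C(r)+\epsilon)\right)$. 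Choosing $k = c_1 \log \tilde n / \epsilon^2$ with $c_1$ a sufficiently large constant (depending only on $r$) makes the single-node failure probability at most $\tilde n^{-2}$. A union bound over the $\tilde n$ nodes in $V_{\geq 2}$ yields the claim that the estimator is within $\epsilon\,n(n-2)$ of $S_u$ for every $u$ simultaneously, with probability $1 - O(1/\tilde n)$. Dividing by $(n-1)(n-2)$ converts this to the $\epsilon$ additive guarantee on $\bc{u}$.

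The work is almost entirely bookkeeping since the surrounding discussion has already laid out the key substitutions; the one subtle point to check carefully is the convergence of $\sum_k k^{2-r}$, which is exactly where the hypothesis $r>3$ enters and which controls the constant $C(r)$ absorbed into the big-$\Omega$. Everything else is an application of Bernstein's inequality and a union bound, so I would present the proof as a direct verification of Inequality~\eqref{eq:ugly} under the power-law moment bounds rather than re-deriving the concentration argument.
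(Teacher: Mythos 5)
Your route is the paper's: deduce $\tilde n=\Theta(n)$ from $\alpha>2$, bound $\sum_i val_i^2=O(n^3)$ from the convergence of $\sum_k k^{2-r}$ for $r>3$, and substitute into the Bernstein bound \eqref{eq:ugly}. The gap is in your final simplification, which is exactly where the argument has to be watched. From your own stated numerator $\Theta(\epsilon^2 k n^3)$ and denominator $O\bigl((C(r)+\epsilon)n^4\bigr)$, the exponent is $-\Omega\bigl(\epsilon^2 k/((C(r)+\epsilon)\,n)\bigr)$, not $-\Omega\bigl(\epsilon^2 k/(C(r)+\epsilon)\bigr)$: a factor of $n$ has been dropped. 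Tracing it back, the culprit is the Bernstein linear term $\tfrac{1}{3}(1+\Delta_1)\epsilon n^3$: under the trivial bound $\Delta_1\le n$ it is $\Theta(\epsilon n^4)$ and dominates the variance term $C(r)n^3$, so the correctly specialized exponent is $-\Omega\bigl(\epsilon^2 k/(C(r)+(1+\Delta_1)\epsilon/3)\bigr)$. As written, your argument therefore only certifies $k=\Omega(n\log\tilde n/\epsilon^2)$, which is vacuous. To get the stated sample complexity you must control $\Delta_1$ rather than bound it by $n$: the power-law assumption on the degree-1 sequence gives $\Delta_1=O(n^{1/r})$ with high probability, which improves the requirement to $k=\Omega\bigl(\log\tilde n/\epsilon^2+n^{1/r}\log\tilde n/\epsilon\bigr)$ but still leaves a polynomial dependence on $n$; the clean $k=\Omega(\log\tilde n/\epsilon^2)$ only follows if $(1+\Delta_1)\epsilon=O(1)$, or if the few heavy pivots with large $deg_1$ are handled separately (e.g., queried deterministically as in Theorem~\ref{thm:sample}) before Bernstein is applied to the remaining light terms.

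In fairness, the paper's own in-text derivation makes the identical jump: it substitutes $\Delta_1\le n$ and then writes the bound as $\epsilon^2 k/(C+\epsilon/3)$, silently discarding the same $(1+\Delta_1)$ factor. So you have faithfully reproduced the published argument, including its weak point; but judged on its own terms the proof does not close at this step, and an honest write-up should either add the missing hypothesis on $\Delta_1$ or make the residual $n^{1/r}/\epsilon$ term explicit.
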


Understanding the sum \( S_u \) (Equation~\eqref{eq:su}) from an both an average case and extremal combinatorics perspective is an intriguing question, and finding more straightforward expressions for it is also of great interest that can lead to cleaner bounds for large classes of graphs.


\section{Experiments}
\label{sec:exp}
\subsection{Experimental setup} 

\begin{table}[h!]
\centering
\begin{tabular}{|c|c|c|c|c|}
\hline
Dataset & $n$  & $m$ & $1- \frac{V_1^{(0)}}{n}$ & $\frac{\text{2-core size}}{n}$  \\
\hline
soc-dolphins  & 62 & 159 & 0.85 & 0.85\\
\hline
ca-CSphd  & 1\,882 & 1\,740  & 0.3 & 0.08\\
\hline
rt\_obama  & 3\,212 & 3\,423 & 0.2 & 0.11\\
\hline
soc-hamsterster  & 2\,426 & 16\,630 & 0.87 & 0.86\\
\hline
rt\_occupywallstnyc  & 3\,609 & 3\,833 & 0.11 & 0.08\\
\hline
soc-wiki-Vote  & 889 & 2914 & 0.78 & 0.76\\
\hline
ca-CSphd  & 1\,882 & 1\,740  & 0.3 & 0.08\\
\hline
inf-power  & 4\,941 & 6\,594 & 0.75 & 0.67\\
\hline
email-univ & 1\,133 & 5\,451 & 0.86 & 0.86\\
\hline
fb-pages-food  & 620 & 2\,102  & 0.8 & 0.77\\
\hline
ca-Erdos992   & 5\,094 & 7\,515 & 0.29 & 0.28\\
\hline
inf-euroroad  & 1174 & 1417 & 0.83 & 0.64\\
\hline
email-dnc-corecipient  & 906 & 10\,429 & 0.62 & 0.61\\
\hline
email-EuAll & 265\,214 & 365\,570 & 0.148  &  0.145\\
\hline
\end{tabular}
\caption{Number of nodes, edges, remaining fraction of nodes after one round of peeling, and the relative size of the 2-core for each dataset.}
\label{tab:data}
\end{table}

We use a variety of real-world graphs, publicly available at SNAP~\cite{snap} and Network Repository~\cite{nr}.  The characteristics of the  graphs we run the algorithms on are detailed in Table~\ref{tab:data}. As we can see from the last two columns, our observations from Figure~\ref{fig:combined} remain consistent: a significant portion of the nodes have a degree of 1, and most of these nodes are removed in the first round. To thoroughly understand the effect of removing degree-1 nodes on estimating betweenness centrality, our experimental section focuses on small-sized networks with up to 365,000 edges. Using these smaller networks allows us to obtain ground truth values for accurate comparison and analysis.  We run the experiments on a Mac with a 3.3 GHz Quad-Core Intel Core i5 processor and  8 GB 1867 MHz DDR3 of main memory. Our code is written in Python3 and is an adaptation of Networkx~\cite{hagberg2020networkx} betweenness centrality code. 


\subsection{Synthetic Experiment}
\label{subsec:synth}

We construct a graph $G$   with many degree-1 nodes and a core structure with nodes with high betweenness centrality. We generate a core of 50 nodes.  This core is divided into two subgraphs: one containing the first half of the nodes and the other containing the second half. A central node is chosen to connect the two subgraphs, serving as a node with distinctively high betweenness centrality. To create the connections, edges are added between each node in the first subgraph (excluding the central node) and the central node. Similarly, edges are added between the central node and each node in the second subgraph (excluding the central node). Within each subgraph, all nodes are fully connected to every other node, forming two complete subgraphs. Additionally, the graph includes 3000 nodes of degree 1. These nodes are connected to the core nodes in a skewed distribution, following a linearly decreasing pattern where nodes in the core are connected to a varying number of degree-1 nodes, excluding the central node which is not connected to any degree-1 nodes. This setup results in a total of 3000 degree-1 nodes connected to the core nodes. For visualization purposes, a subgraph comprising the first 500 nodes of \( G \) is extracted, including all 50 core nodes and a subset of the degree-1 nodes. The visualization in Figure~\ref{fig:synth}(a) uses a spring layout to emphasize the structure, with core nodes colored blue and degree-1 nodes colored red.    Figure~\ref{fig:synth}(b) plots the relative $\ell_1$ error of our method and Brandes-Pich. As we see, the 
Figure~\ref{fig:synth}(c) plots the BC estimates  using Brandes-Pich on the original graph~\cite{brandes2007centrality} and Algorithm~\ref{alg:oneround} with a sample of $k=10$ pivots. Both algorithms effectively identify the vertex with the highest BC score, and the estimates obtained are very close to the true values for all vertices.

\begin{figure}[t!]
    \centering
    \begin{tabular}{ccc}
        \includegraphics[width=0.25\textwidth]{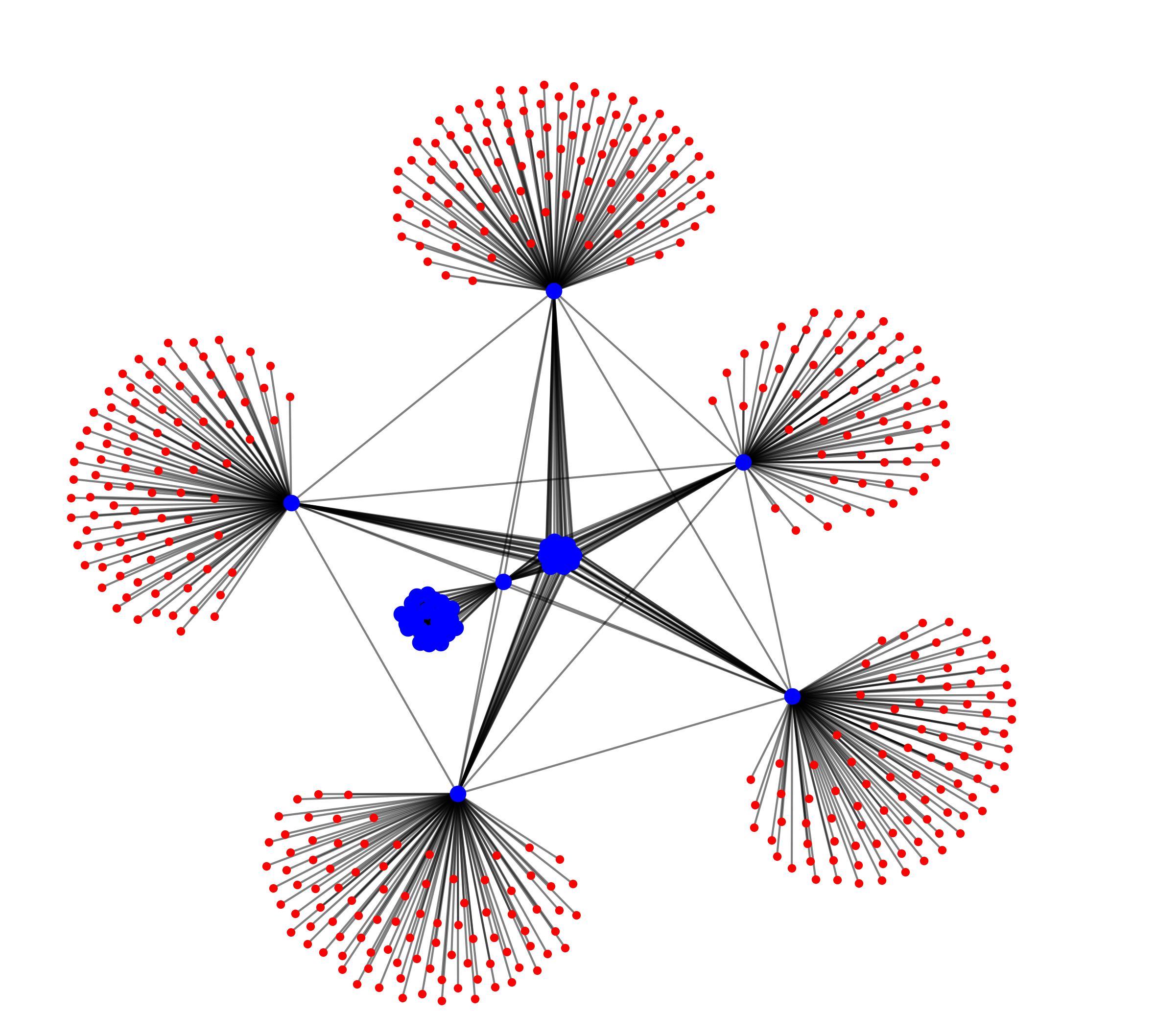} &
        \includegraphics[width=0.4\textwidth]{results/synthetic_dataset_range_k_relative_err.pdf}  &
        \includegraphics[width=0.35\textwidth]{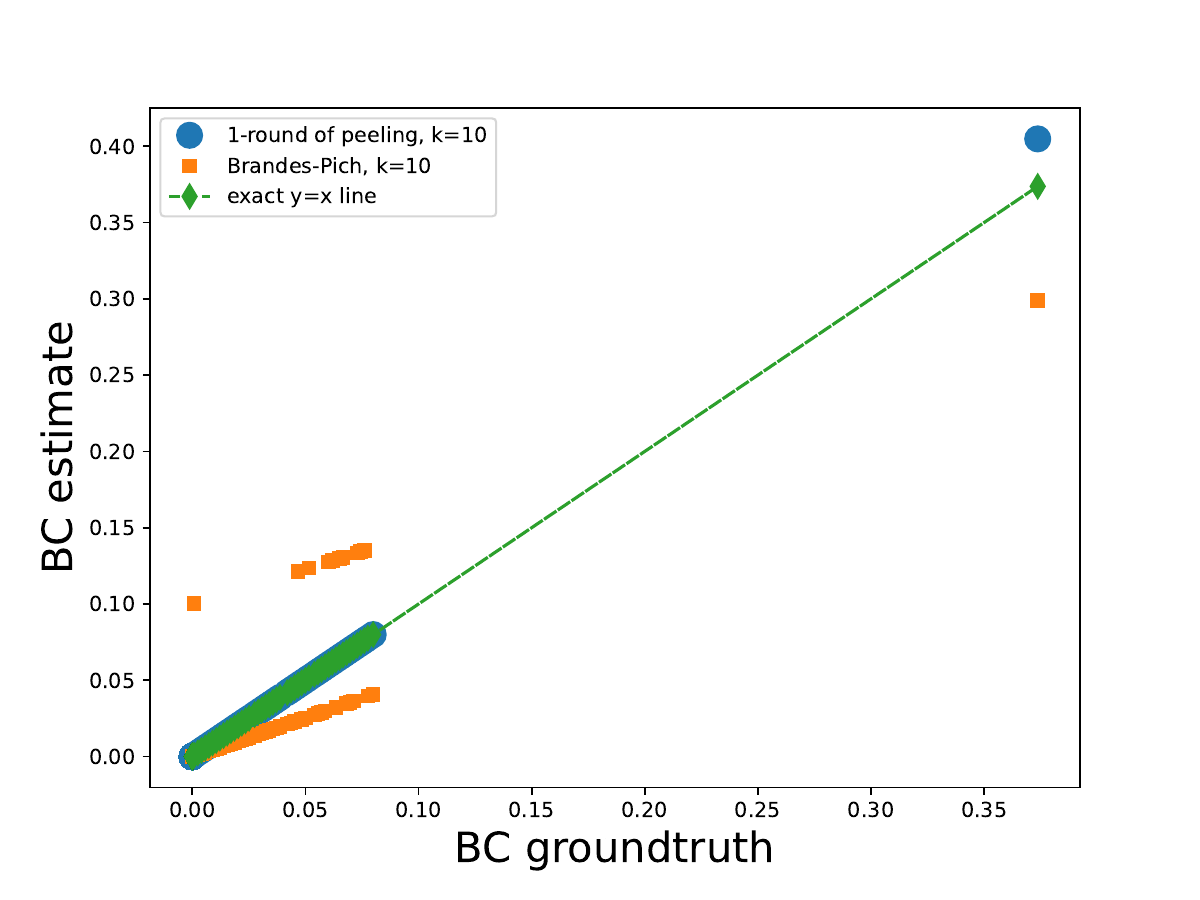}\\
                (a) & (b) & (c) \\ 

    \end{tabular}
    \caption{   (a)  Partial view of the synthetic graph with core and degree-1 nodes.  
          (b) Relative   $\ell_1$ error of Brandes-Pich and our method vs the number of sampled pivots $k \in \{10,20,\ldots,100\}$.  (c) BC estimates using a sample of 10 pivots with and without peeling vs groundtruth betweenness centralities.}
    \label{fig:synth}
\end{figure}


 \begin{figure}[t!]
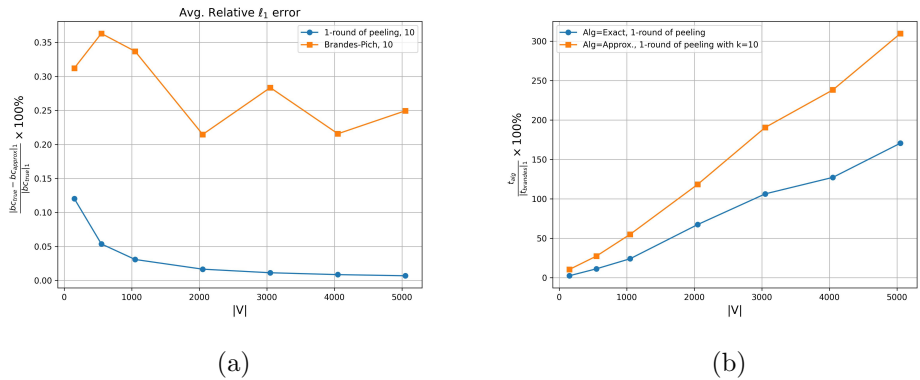

    \centering
    \begin{tabular}{cc}
        \includegraphics[width=0.4\textwidth]{results/synth_range_V1_relative_err} &
        \includegraphics[width=0.4\textwidth]{results/synth_range_V1_speedup.pdf}\\
                (a) & (b)   \\ 

    \end{tabular}
    \caption{ (a) For a fixed number of random pivots \( k = 10 \), the relative error of our method approaches 0, in contrast to the prediction by Brandes-Pich \cite{brandes2007centrality}, as stated in Theorem~\ref{thm:sample}.
(b) Speedups over the exact BC algorithm due to Brandes \cite{brandes2001faster} are achieved by both our exact memory-efficient implementation and our approximate version with a sample size of \( k = 10 \) pivots.}
    \label{fig:synth2}
\end{figure}

To illustrate Theorem~\ref{thm:sample} we slightly modify the above construction to make $Y$ a small set. We let the number of degree 1 nodes be a parameter $|V_1|$, while the core size is fixed to 50. Thus the total number of nodes is a variable equal to $n=50+|V_1|$. 
These degree-1 nodes are connected to the core nodes in a skewed distribution. Starting from the core nodes, excluding the central node (e.g., node 0), each core node $i$ starting from $i=1$ is connected to $|V_1|/2^{i+1}$ degree-1 nodes. The process continues until all degree-1 nodes are assigned or the remaining nodes are insufficient to follow the exact distribution pattern.  Figure~\ref{fig:synth2} presents several of our key findings. Specifically, Figure~\ref{fig:synth2}(b) demonstrates that the error of our approach, for a fixed number of samples, decreases as the number of nodes in \( G \) increases, with \(\tilde{n} = 50\) and only \(|V_1|\) increasing. These results, averaged over 5 runs, show that the Brandes-Pich algorithm exhibits a fluctuating error that does not decrease with \( n \). It is important to note that the BC scores of the degree-1 nodes are 0, so all the error arises from the core nodes. Given \( k = 10 \) and the increasing number of degree-1 nodes, the sample pivots eventually consist entirely of degree-1 nodes for a sufficiently large \(|V_1|\), resulting in significant relative errors for many nodes, except for those with the highest BC scores. Figure~\ref{fig:synth2}(b) shows how our 1-round of peeling, with and without sampling yields significant speedups over the standard BC computation.

\begin{figure}[t!]
    \centering
    \begin{tabular}{ccc}     
        \includegraphics[width=0.28\textwidth]{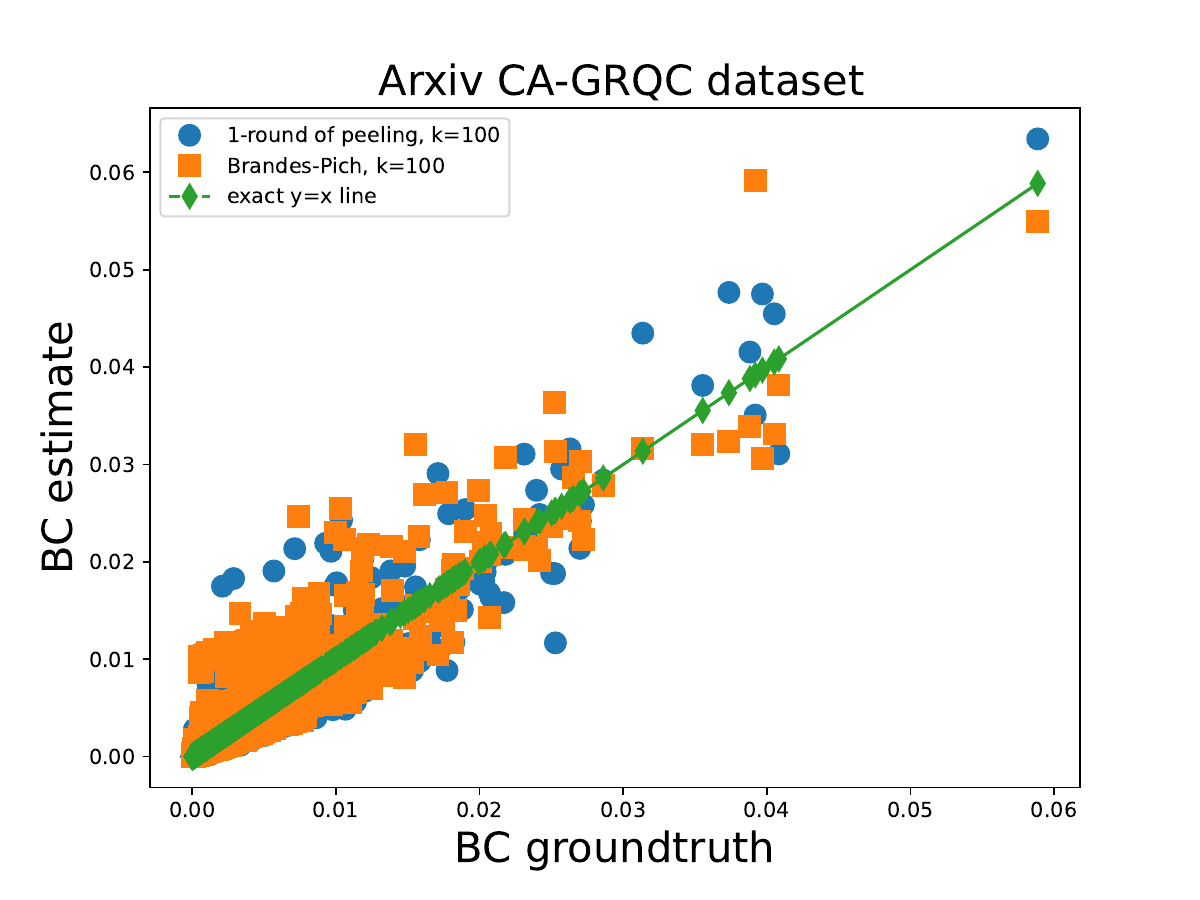} &
         \includegraphics[width=0.32\textwidth]{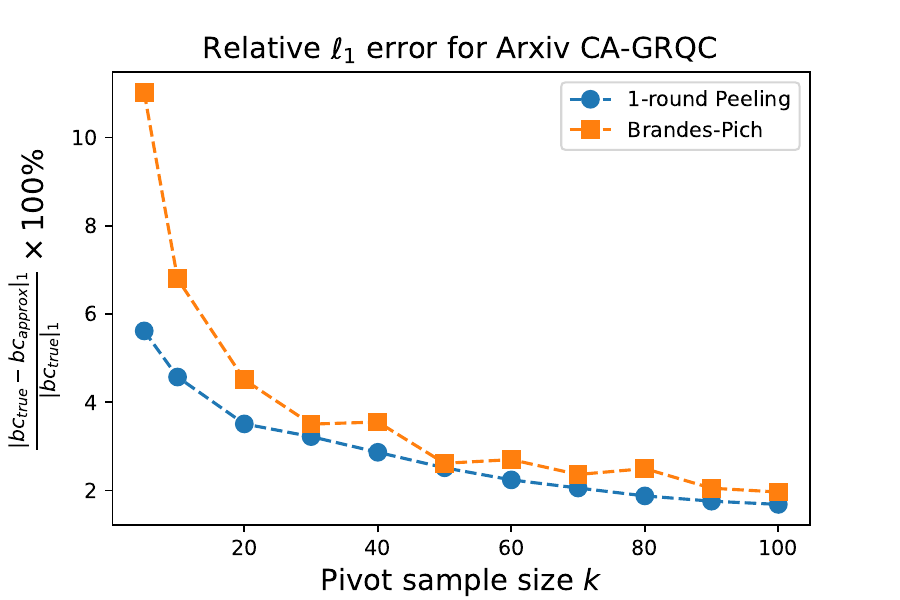} &
         \includegraphics[width=0.32\textwidth]{results/email-Eu-core_range_k_relative_err.pdf} 
        \\  
    \end{tabular}
    \caption{ \label{fig:res}   (a) BC estimates using a sample of size 100 with and without peeling vs groundtruth betweenness centralities for the Arxiv GR-QC dataset. (b) Average relative $\ell_1$ error of Brandes-Pich and our method over 5 runs vs the number of sampled pivots   for the Arxiv GR-QC, and (c)  the email-Eu-core datasets respectively.}
\end{figure}

\begin{figure}[!t]
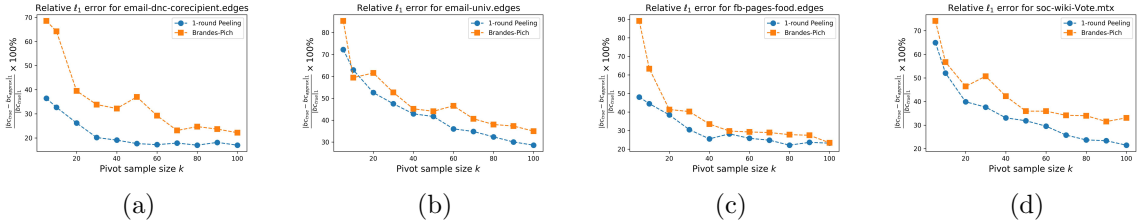

    \centering
    \begin{tabular}{cccc}
 
        \includegraphics[width=0.23\textwidth]{results/email-dnc-corecipient.edges_range_k_relative_err.pdf} & 
        \includegraphics[width=0.23\textwidth]{results/email-univ.edges_range_k_relative_err.pdf} 
        &
         \includegraphics[width=0.23\textwidth]{results/fb-pages-food.edges_range_k_relative_err.pdf} & 
         \includegraphics[width=0.23\textwidth]{results/soc-wiki-Vote.mtx_range_k_relative_err.pdf}
        \\
        (a) & (b) & (c) &  (d) \\
    \end{tabular}
    \caption{Average relative $\ell_1$ error of Brandes-Pich and our method vs the number of sampled pivots $k \in \{5,10,20,\ldots,100\}$ for four real-world networks from Table~\ref{tab:data} over five runs. }
    \label{fig:res2}
\end{figure}

\subsection{Real datasets} 
\label{exp:exact}

\paragraph{{\bf Exact BC computations}.}
We evaluate the performance of NetworkX, which utilizes a native Python implementation of Brandes' exact betweenness centrality algorithm, as referenced in Hagberg et al.~\cite{hagberg2020networkx}, against our memory-efficient algorithm (see Algorithm~\ref{alg:memefficient}). A principal observation, consistent with the conclusions of Sariyuce et al.~\cite{sariyuce2013shattering}, is that peeling accelerates the computation of betweenness centrality.  Specifically on average we obtain a {\bf 9.24}$\times$-fold speedup.

\paragraph{{\bf Approximate BC computations.}} We focus on assessing the accuracy of Algorithm~\ref{alg:memefficient} vs. Brandes-Pich sampling algorithm. 
In a nutshell, the empirical results we obtain are consistent: (i) Algorithm~\ref{alg:memefficient} generally exhibits lower error rates. (ii) This effect is more pronounced when the number of pivots is smaller. Figure~\ref{fig:res} presents some representative findings. In particular, Figure~\ref{fig:res}(a) displays a scatterplot comparing the betweenness centrality (BC) estimates from our algorithm and the Brandes-Pick method, both using $k=100$ for the collaboration network {\it Arxiv GR-QC}, against the actual BC scores. It is evident that both algorithms identify the node with the highest centrality, yet our Algorithm~\ref{alg:memefficient} consistently delivers more accurate estimates. This accuracy is highlighted in Figure~\ref{fig:res}(b), which illustrates the relative $\ell_1$ error of each algorithm for various numbers of pivots ($k=5, 10, 20, \ldots, 100$). At $k=100$, both algorithms show comparable accuracy, but the disparity grows with smaller $k$ values. These trends are echoed in the {\it email-Eu-core} dataset, where the results are even more marked due to the high number of degree-1 vertices in this network. Here, we explore a range of pivot counts from $k=5, 10, 20, \ldots, 100, 500, 1000$ until the relative error reduces to 25\% or lower. Figure~\ref{fig:res2} provides more plots of the average relative $\ell_1$ estimation error  as a function of the number of pivots $k$ for more real-world datasets; the plots are consistent with our findings for the {\it Arxiv GR-QC}  collaboration network and {\it email-Eu-core} communication network, showing that our method consistently achieves at least as low error as the Brandes-Pich algorithm for a given number of pivots, with the difference being pronounced for small number of pivots.


\section{Conclusion}
\label{sec:concl}
In this note, we explore the impact of removing degree-1 nodes from a graph to accelerate betweenness centrality computations~\cite{sariyuce2013shattering}. We present novel insights on an exact computation algorithm, demonstrating that this approach can significantly enhance the performance of approximate betweenness centrality computation in terms of both accuracy and speed. We provide both empirical and theoretical evidence. Notably, we demonstrate that removing degree-1 nodes leads from constant factor to asymptotic speedups depending on the size of the 2-core. Practically, leveraging the empirical observation that most degree-1 nodes in real-world networks are removed in the first round, we offer a memory-efficient betweenness centrality (BC) computation algorithm. This algorithm can be implemented by adding just a few lines to Brandes' exact code.

Our research raises a number of intriguing questions. While we explored uniform pivot sampling, utilizing pivots sampled with varying probabilities, along with other sampling methods as discussed in previous studies~\cite{riondato2018abra,riondato2014fast}, can prove highly effective in real-world applications. These approaches often result in reduced variance when employing Horvitz-Thompson and similar estimators~\cite{brewer2013sampling}. 
Additionally, expanding on the manipulations of bridges and articulation points as detailed by Sariyuce et al.~\cite{sariyuce2013shattering}, a compelling question arises regarding the definition of a new type of vertex separator. This type of separator would ensure that the shortest paths between node pairs on each side exclusively use nodes from their respective sides. Assuming such a separator can be found fast, this property could help speedup further BC computations. Also, obtaining cleaner bounds for other classes of graphs (e.g., bounded treewidth) remains an open question.

\bibliographystyle{abbrv}
\bibliography{ref}

\begin{thebibliography}{10}

\bibitem{anthonisse1971rush}
J.~M. Anthonisse.
\newblock {\em The rush in a directed graph}.
\newblock Stichting Mathematisch Centrum. Mathematische Besliskunde, 1971.

\bibitem{avin2018elites}
C.~Avin, Z.~Lotker, D.~Peleg, Y.-A. Pignolet, and I.~Turkel.
\newblock Elites in social networks: An axiomatic approach to power balance and
  price’s square root law.
\newblock {\em PloS one}, 13(10):e0205820, 2018.

\bibitem{bader2007approximating}
D.~A. Bader, S.~Kintali, K.~Madduri, and M.~Mihail.
\newblock Approximating betweenness centrality.
\newblock In {\em Algorithms and Models for the Web-Graph: 5th International
  Workshop, WAW 2007, San Diego, CA, USA, December 11-12, 2007. Proceedings 5},
  pages 124--137. Springer, 2007.

\bibitem{bader2006parallel}
D.~A. Bader and K.~Madduri.
\newblock Parallel algorithms for evaluating centrality indices in real-world
  networks.
\newblock In {\em 2006 International Conference on Parallel Processing
  (ICPP'06)}, pages 539--550. IEEE, 2006.

\bibitem{baglioni2012fast}
M.~Baglioni, F.~Geraci, M.~Pellegrini, and E.~Lastres.
\newblock Fast exact computation of betweenness centrality in social networks.
\newblock In {\em 2012 IEEE/ACM International Conference on Advances in Social
  Networks Analysis and Mining}, pages 450--456. IEEE, 2012.

\bibitem{bentert}
M.~Bentert, A.~Dittmann, L.~Kellerhals, A.~Nichterlein, and R.~Niedermeier.
\newblock An adaptive version of brandes' algorithm for betweenness centrality.
\newblock {\em J. Graph Algorithms Appl.}, 24(3):483--522, 2020.

\bibitem{bernstein1924modification}
S.~Bernstein.
\newblock On a modification of chebyshev's inequality and of the error formula
  of laplace.
\newblock {\em Annals of the University of Kiev}, 1:38--48, 1924.

\bibitem{boucheron2013concentration}
S.~Boucheron, G.~Lugosi, and P.~Massart.
\newblock {\em Concentration Inequalities: A Nonasymptotic Theory of
  Independence}.
\newblock Oxford University Press, 2013.

\bibitem{brandes2001faster}
U.~Brandes.
\newblock A faster algorithm for betweenness centrality.
\newblock {\em Journal of mathematical sociology}, 25(2):163--177, 2001.

\bibitem{brandes2008variants}
U.~Brandes.
\newblock On variants of shortest-path betweenness centrality and their generic
  computation.
\newblock {\em Social networks}, 30(2):136--145, 2008.

\bibitem{brandes2007centrality}
U.~Brandes and C.~Pich.
\newblock Centrality estimation in large networks.
\newblock {\em International Journal of Bifurcation and Chaos},
  17(07):2303--2318, 2007.

\bibitem{brewer2013sampling}
K.~R. Brewer and M.~Hanif.
\newblock {\em Sampling with unequal probabilities}, volume~15.
\newblock Springer Science \& Business Media, 2013.

\bibitem{chen2023betweenness}
M.~Chen, S.~Mangalathu, and J.-S. Jeon.
\newblock Betweenness centrality-based seismic risk management for bridge
  transportation networks.
\newblock {\em Engineering Structures}, 289:116301, 2023.

\bibitem{clrs}
T.~H. Cormen, C.~E. Leiserson, R.~L. Rivest, and C.~Stein.
\newblock {\em Introduction to Algorithms}.
\newblock MIT Press, Cambridge, MA, 3 edition, 2009.

\bibitem{cousins2023bavarian}
C.~Cousins, C.~Wohlgemuth, and M.~Riondato.
\newblock Bavarian: Betweenness centrality approximation with variance-aware
  rademacher averages.
\newblock {\em ACM Transactions on Knowledge Discovery from Data}, 17(6):1--47,
  2023.

\bibitem{freeman1977set}
L.~C. Freeman.
\newblock A set of measures of centrality based on betweenness.
\newblock {\em Sociometry}, pages 35--41, 1977.

\bibitem{geisberger2008contraction}
R.~Geisberger, P.~Sanders, D.~Schultes, and D.~Delling.
\newblock Contraction hierarchies: Faster and simpler hierarchical routing in
  road networks.
\newblock In {\em Experimental Algorithms: 7th International Workshop, WEA 2008
  Provincetown, MA, USA, May 30-June 1, 2008 Proceedings 7}, pages 319--333.
  Springer, 2008.

\bibitem{hagberg2020networkx}
A.~Hagberg and D.~Conway.
\newblock Networkx: Network analysis with python.
\newblock 2020.

\bibitem{hayashi2015fully}
T.~Hayashi, T.~Akiba, and Y.~Yoshida.
\newblock Fully dynamic betweenness centrality maintenance on massive networks.
\newblock {\em Proceedings of the VLDB Endowment}, 9(2):48--59, 2015.

\bibitem{hoang2019round}
L.~Hoang, M.~Pontecorvi, R.~Dathathri, G.~Gill, B.~You, K.~Pingali, and
  V.~Ramachandran.
\newblock A round-efficient distributed betweenness centrality algorithm.
\newblock In {\em Proceedings of the 24th Symposium on Principles and Practice
  of Parallel Programming}, pages 272--286, 2019.

\bibitem{hoeffding1994probability}
W.~Hoeffding.
\newblock Probability inequalities for sums of bounded random variables.
\newblock {\em The collected works of Wassily Hoeffding}, pages 409--426, 1994.

\bibitem{jacob2005algorithms}
R.~Jacob, D.~Kosch{\"u}tzki, K.~A. Lehmann, L.~Peeters, and D.~Tenfelde-Podehl.
\newblock Algorithms for centrality indices.
\newblock {\em Network analysis: Methodological foundations}, pages 62--82,
  2005.

\bibitem{ji2016refining}
S.~Ji and Z.~Yan.
\newblock Refining approximating betweenness centrality based on samplings.
\newblock {\em arXiv preprint arXiv:1608.04472}, 2016.

\bibitem{kitsak2010identification}
M.~Kitsak, L.~K. Gallos, S.~Havlin, F.~Liljeros, L.~Muchnik, H.~E. Stanley, and
  H.~A. Makse.
\newblock Identification of influential spreaders in complex networks.
\newblock {\em Nature physics}, 6(11):888--893, 2010.

\bibitem{koschutzki2005centrality}
D.~Kosch{\"u}tzki, K.~A. Lehmann, L.~Peeters, S.~Richter, D.~Tenfelde-Podehl,
  and O.~Zlotowski.
\newblock Centrality indices.
\newblock {\em Network analysis: methodological foundations}, pages 16--61,
  2005.

\bibitem{snap}
J.~Leskovec and A.~Krevl.
\newblock Snap datasets: Stanford large network dataset collection.
\newblock \url{http://snap.stanford.edu/data}, June 2014.

\bibitem{mahmoody2016scalable}
A.~Mahmoody, C.~E. Tsourakakis, and E.~Upfal.
\newblock Scalable betweenness centrality maximization via sampling.
\newblock In {\em Proceedings of the 22nd ACM SIGKDD international conference
  on knowledge discovery and data mining}, pages 1765--1773, 2016.

\bibitem{nemeth1985international}
R.~J. Nemeth and D.~A. Smith.
\newblock International trade and world-system structure: A multiple network
  analysis.
\newblock {\em Review (Fernand Braudel Center)}, 8(4):517--560, 1985.

\bibitem{neo4j}
Neo4j.
\newblock \url{https://neo4j.com/}.

\bibitem{Newman2005}
M.~E.~J. Newman.
\newblock A measure of betweenness centrality based on random walks.
\newblock {\em Social Networks}, 27(1):39--54, 2005.

\bibitem{papachristou2021sublinear}
M.~Papachristou.
\newblock Sublinear domination and core--periphery networks.
\newblock {\em Scientific Reports}, 11(1):15528, 2021.

\bibitem{pellegrina2023silvan}
L.~Pellegrina and F.~Vandin.
\newblock Silvan: estimating betweenness centralities with progressive sampling
  and non-uniform rademacher bounds.
\newblock {\em ACM Transactions on Knowledge Discovery from Data}, 18(3):1--55,
  2023.

\bibitem{puzis2015topology}
R.~Puzis, Y.~Elovici, P.~Zilberman, S.~Dolev, and U.~Brandes.
\newblock Topology manipulations for speeding betweenness centrality
  computation.
\newblock {\em Journal of Complex Networks}, 3(1):84--112, 2015.

\bibitem{rel}
RelationalAI.
\newblock \url{https://relational.ai/}.

\bibitem{riondato2014fast}
M.~Riondato and E.~M. Kornaropoulos.
\newblock Fast approximation of betweenness centrality through sampling.
\newblock In {\em Proceedings of the 7th ACM international conference on Web
  search and data mining}, pages 413--422, 2014.

\bibitem{riondato2018abra}
M.~Riondato and E.~Upfal.
\newblock Abra: Approximating betweenness centrality in static and dynamic
  graphs with rademacher averages.
\newblock {\em ACM Transactions on Knowledge Discovery from Data (TKDD)},
  12(5):1--38, 2018.

\bibitem{rombach2017core}
P.~Rombach, M.~A. Porter, J.~H. Fowler, and P.~J. Mucha.
\newblock Core-periphery structure in networks (revisited).
\newblock {\em SIAM review}, 59(3):619--646, 2017.

\bibitem{nr}
R.~A. Rossi and N.~K. Ahmed.
\newblock The network data repository with interactive graph analytics and
  visualization.
\newblock In {\em AAAI}, 2015.

\bibitem{sariyuce2017graph}
A.~E. Sariy{\"u}ce, K.~Kaya, E.~Saule, and {\"U}.~V. {\c{C}}ataly{\"u}rek.
\newblock Graph manipulations for fast centrality computation.
\newblock {\em ACM Transactions on Knowledge Discovery from Data (TKDD)},
  11(3):1--25, 2017.

\bibitem{sariyuce2013shattering}
A.~E. Sariy{\"u}ce, E.~Saule, K.~Kaya, and {\"U}.~V. {\c{C}}ataly{\"u}rek.
\newblock Shattering and compressing networks for betweenness centrality.
\newblock In {\em Proceedings of the 2013 SIAM International Conference on Data
  Mining}, pages 686--694. SIAM, 2013.

\bibitem{vapnik2015uniform}
V.~N. Vapnik and A.~Y. Chervonenkis.
\newblock On the uniform convergence of relative frequencies of events to their
  probabilities.
\newblock In {\em Measures of complexity: festschrift for alexey chervonenkis},
  pages 11--30. Springer, 2015.

\bibitem{vershynin2018high}
R.~Vershynin.
\newblock {\em High-Dimensional Probability: An Introduction with Applications
  in Data Science}, volume~47.
\newblock Cambridge University Press, 2018.

\bibitem{vogiatzis2016evacuation}
C.~Vogiatzis and P.~M. Pardalos.
\newblock Evacuation modeling and betweenness centrality.
\newblock In {\em Dynamics of Disasters—Key Concepts, Models, Algorithms, and
  Insights: Kalamata, Greece, June--July 2015 2}, pages 345--359. Springer,
  2016.

\bibitem{wang2006fast}
D.~E.~J. Wang.
\newblock Fast approximation of centrality.
\newblock {\em Graph algorithms and applications}, 5(5):39, 2006.

\bibitem{zhang2015identification}
X.~Zhang, T.~Martin, and M.~E. Newman.
\newblock Identification of core-periphery structure in networks.
\newblock {\em Physical Review E}, 91(3):032803, 2015.

\end{thebibliography}
 

\section{Appendix}
\label{sec:appendix}

\paragraph{{\bf SNAP datasets}} used in Figure~\ref{fig:combined}.

\begin{table}[h!]
\centering
\small
\begin{tabular}{cc}
\begin{tabular}{|c|r|r|}
\hline
SNAP Datasets~\cite{snap} & $n$ & $m$ \\
\hline
ca-HepPh & 12\,008 & 118\,521 \\
\hline
email-EuAll & 265\,214 & 365\,570 \\
\hline
wiki-Vote & 7\,115 & 100\,762 \\
\hline
ca-GrQc & 5\,242 & 14\,496 \\
\hline
ca-AstroPh & 18\,772 & 198\,110 \\
\hline
com-dblp.ungraph & 317\,080 & 1\,049\,866 \\
\hline
cit-HepTh & 27\,770 & 352\,324 \\
\hline
amazon0312 & 400\,727 & 2\,349\,869 \\
\hline
email-Enron & 36\,692 & 183\,831 \\
\hline
facebook\_combined & 4\,039 & 88\,234 \\
\hline
\end{tabular} &
\begin{tabular}{|c|r|r|}
\hline
ca-HepTh & 9\,877 & 25\,998 \\
\hline
web-Google & 875\,713 & 4\,322\,051 \\
\hline
soc-Slashdot0902 & 82\,168 & 582\,533 \\
\hline
amazon0302 & 262\,111 & 899\,792 \\
\hline
com-amazon.ungraph & 334\,863 & 925\,872 \\
\hline
cit-HepPh & 34\,546 & 420\,921 \\
\hline
soc-Slashdot0811 & 77\,360 & 546\,487 \\
\hline
ca-CondMat & 23\,133 & 93\,497 \\
\hline
\end{tabular}
\end{tabular}
\caption{Dataset description. All datasets are publicly available from SNAP~\cite{snap}. } 
\label{table:data}
\end{table}

\end{document}